\documentclass[a4paper,fleqn]{cas-dc}

\usepackage[authoryear,longnamesfirst]{natbib}

\usepackage{cuted}

\usepackage{amsthm,mathtools,bm}
\usepackage{tabularx,longtable,array}
\graphicspath{{figures/}}
\usepackage{tikz}
\usetikzlibrary{arrows.meta,positioning,fit,shapes.geometric}
\usepackage{enumitem}

\def\tsc#1{\csdef{#1}{\textsc{\lowercase{#1}}\xspace}}
\tsc{WGM}
\tsc{QE}
\newtheorem{theorem}{Theorem}
\newtheorem{proposition}{Proposition}

\newtheorem{corollary}{Corollary}

\newtheorem{assumption}{Assumption}

\newtheorem{remark}{Remark}

\newcommand{\X}{\mathcal{X}}
\newcommand{\A}{\mathcal{A}}

\newcommand{\Cfam}{\mathcal{C}}
\newcommand{\Hfam}{\mathcal{H}}
\newcommand{\Kfam}{\mathcal{K}}
\newcommand{\Pfam}{\mathcal{P}}
\newcommand{\Gdug}{\mathfrak{G}}
\newcommand{\Pp}{\mathbb{P}}
\newcommand{\E}{\mathbb{E}}
\newcommand{\R}{\mathbb{R}}

\newcommand{\F}{\mathbb{F}}
\newcommand{\dd}{\mathrm{d}}
\newcommand{\T}{\mathsf{T}}
\newcommand{\Id}{\mathrm{Id}}
\newcommand{\TV}{\mathrm{TV}}

\newcommand{\Dgm}{\mathrm{Dgm}}

\newcommand{\grid}[1]{#1\mathbin{\times}#1}

\newcolumntype{Y}{>{\raggedright\arraybackslash}X}

\begin{document}
\let\WriteBookmarks\relax
\def\floatpagepagefraction{1}
\def\textpagefraction{.001}

\shorttitle{Dynamical uncertainty geometry}
\shortauthors{G.~Tommei}

\title[mode = title]{Dynamical uncertainty geometry for nonlinear transport}

\author[1]{Giacomo Tommei}[orcid=0000-0002-9593-613X]
\cormark[1]
\ead{giacomo.tommei@unipi.it}
\credit{Conceptualization, Formal analysis, Methodology, Writing -- original draft, Writing -- review and editing}

\affiliation[1]{organization={Department of Mathematics, University of Pisa},
            addressline={Largo B. Pontecorvo 5},
            city={Pisa},
            postcode={56127},
            country={Italy}}

\cortext[1]{Corresponding author.}

\ead[url]{people.unipi.it/tommei}

\begin{abstract}
Uncertainty in nonlinear dynamical systems is often organized by transport structures that are not apparent from posterior geometry alone. We introduce Dynamical Uncertainty Geometry (DUG), a framework that separates three ingredients that are frequently conflated: posterior uncertainty, future observations, and model-induced transport. DUG indexes posterior credible sets and transport-conditioned subsets by a common enclosed-probability coordinate, allowing physically meaningful classes to be tracked across credibility levels rather than examined at a single threshold.

The framework combines a mass-ranked credible filtration, a description of future experiments through their induced probability laws and Fisher geometry, and a labelled transport skeleton representing dynamically distinct outcomes. We establish stability results for the resulting mass-indexed persistence modules under simultaneous perturbations of posterior density and probability measure, including continuum, refinement, and finite atomic formulations. These results provide quantitative control of persistence computed from numerical approximations and weighted grids.

Two benchmark problems illustrate the framework. In a short-arc orbit-determination setting, DUG identifies dynamically distinct return classes within a connected uncertainty region and highlights a difference between information-based and local Fisher-based observation-design criteria. In the Earth-Moon planar circular restricted three-body problem, DUG reveals multiple first-hit transport outcomes coexisting within a connected credible region and provides diagnostics for assessing their numerical resolution. Together, these examples show how topological summaries of posterior geometry, when coupled to transport labels and future experiments, yield a richer description of uncertainty than either posterior probabilities or dynamical classifications alone.

\end{abstract}



\begin{keywords}
nonlinear dynamics \sep uncertainty quantification \sep persistent homology \sep orbit determination \sep restricted three-body problem
\end{keywords}

\maketitle

\section{Introduction}
\label{sec:introduction}

Predicting the future evolution of a dynamical system under uncertainty is a central task across the physical sciences. Whether the objective is orbit determination, transport prediction, or hazard assessment, scientific conclusions are drawn from distributions of plausible states rather than from individual trajectories. Nonlinear dynamics often organize transport through invariant manifolds, separatrices, and other phase-space structures that partition nearby initial conditions into qualitatively different futures. As a result, a connected region of uncertainty need not correspond to a single dynamical outcome. A posterior whose credible sets are connected at every mass level need not have futures organized by a single dynamical route: a stable manifold, separatrix, target preimage or committor ridge can divide one connected uncertainty set into regions reaching different destinations, and a declared future experiment may omit the event map defining those routes. The distinction between the \emph{shape of uncertainty}, the \emph{observations included in an experiment} and the \emph{organization of transport} is the central point of this paper.

It matters operationally: for example, in short arc Orbit Determination (OD) two states close in posterior coordinates can return to the same observation gate after different numbers of revolutions; in the Earth--Moon Planar Circular Restricted Three-Body Problem (PCRTBP) a connected posterior on a Poincar\'e section can straddle first-hit classes at the $L_1$ and $L_2$ necks. A posterior sample estimates probabilities in both cases, but the scientific conclusion needs more: one must identify credible subsets, attach physical labels, and test whether the resulting classes survive credibility and numerical refinement.

\begin{center}
\sffamily\small
\resizebox{0.98\linewidth}{!}{%
\begin{tikzpicture}[font=\small,>=Latex]
  \begin{scope}[xshift=0cm]
    \node[font=\bfseries] at (0,2.35) {(a) Posterior only};
    \draw[thick] (0,0) ellipse (1.75 and 1.12);
    \draw (0,0) ellipse (1.26 and 0.78);
    \draw (0,0) ellipse (0.74 and 0.42);
    \node at (0,-1.48) {$\beta_0(C^m)=1$ for all tested $m$};
    \node at (0,0) {connected uncertainty};
  \end{scope}
  \begin{scope}[xshift=4.8cm]
    \node[font=\bfseries] at (0,2.35) {(b) Add transport skeleton};
    \draw[thick] (0,0) ellipse (1.75 and 1.12);
    \draw (0,0) ellipse (1.26 and 0.78);
    \draw (0,0) ellipse (0.74 and 0.42);
    \draw[very thick] (-0.65,-1.02) .. controls (-0.28,-0.38) and (-0.12,0.42) .. (0.30,1.05);
    \node at (-0.78,0.25) {$L_1$};
    \node at (0.82,-0.10) {$L_2$};
    \node at (0,-1.48) {two labelled classes};
  \end{scope}
  \begin{scope}[xshift=9.6cm]
    \node[font=\bfseries] at (0,2.35) {(c) Scientific output};
    \node[draw,rounded corners,align=center,minimum width=3.25cm,minimum height=0.8cm] (p) at (0,0.65) {labelled class\\probabilities};
    \node[draw,rounded corners,align=center,minimum width=3.25cm,minimum height=0.8cm] (s) at (0,-0.65) {label-aware\\observation design};
    \draw[->,thick] (-2.02,0.20) -- (p.west);
    \draw[->,thick] (-2.02,-0.20) -- (s.west);
  \end{scope}
\end{tikzpicture}%
}
\refstepcounter{figure}\label{fig:intuition}
\par\medskip\noindent\textbf{Figure \thefigure:} {\footnotesize A posterior credible filtration can remain connected while a model-induced descriptor partitions it into labelled first-hit or transport classes. Transport-conditioned persistence checks that the separation is not a single threshold artifact.}
\end{center}

The Dynamical Uncertainty Geometry (DUG) is a mathematical object designed to make these choices explicit. Its data are grouped by provenance rather than assumed to be independent layers:
\begin{enumerate}[label=(\roman*)]
\item a \emph{posterior block} containing admissibility, the reference measure, the posterior, its mass-indexed credible filtration and the persistence of that filtration;
\item an \emph{experiment block} containing the full family of future output laws and its local Fisher pullback tensor;
\item a \emph{transport block} containing labelled model-induced descriptors, their incidences with credible sets and the resulting transport conditioned persistence modules.
\end{enumerate}
The three entries form an informational interface, not a canonical decomposition of the physical system: they may share the same dynamics and observation history, and if the output is a complete state trajectory then deterministic event labels are measurable functions of the experiment rather than primitives. Their separation becomes substantive exactly when the output is restricted by an independently specified sensor, cadence, noise model and data product. In particular, persistence of the posterior credible filtration is never conflated with persistence of transport-conditioned subsets.

Density superlevel sets already underlie the highest density regions and density cluster trees \cite{Hyndman1996,ChaudhuriDasgupta2010}. Distance to a measure (DTM) and DTM filtrations instead regularize distance functions by a mass parameter and obtain Wasserstein stability for geometric inference from noisy point clouds \cite{ChazalCohenSteinerMerigot2011,Anai2019}. Statistical work gives confidence sets for persistence diagrams and consistent homology recovery from estimated density level sets \cite{Fasy2014,BobrowskiMukherjeeTaylor2017}. DUG addresses a different perturbation: it retains density superlevel sets but replaces their raw density threshold by posterior rank, then controls simultaneous log-density and probability measure errors. Its output is also intersected with physically labelled transport sets. Transfer operator and set oriented methods identify coherent sets and approximate global structures \cite{DellnitzJunge2002,FroylandPadberg2009,Froyland2013,DellnitzKlusZiessler2017}; transition-path theory uses committors and reactive currents \cite{EVandenEijnden2006,EVE2010}. These methods can supply the transport entry to DUG, whose added output is their incidence with posterior mass and a declared future experiment.

The coupling supports statements unavailable from either side alone. Persistence of the posterior credible filtration can say that every credible set is connected, and transport analysis can identify $L_1$ and $L_2$ first-hit classes, but only their incidence module states whether both labels remain represented as posterior mass is enlarged. Likewise, only transport labels, posterior weights and conditional future laws jointly define a label discrimination utility such as $I(\Lambda;Z_\tau)$.

OD is the motivating class: covariance propagation, Admissible Regions (ARs), Lines and Manifolds of Variations (LOV, MOV), and Monte Carlo posteriors each solve important operational problems \cite{Milani2005,Tommei2007,MilaniGronchi2010,DelVigna2020,DelVigna2021}, but each retains a different encoding of the uncertainty: a covariance matrix does not determine global topology, a sampling manifold does not by itself define a calibrated ambient measure, and a particle cloud does not expose finite-time distinguishability unless the organizing structures are reconstructed.

\paragraph{What is new, and at which level.}
Since DUG assembles objects that exist separately, it is worth separating the claims by type.

\emph{Theorem level.} Theorem~\ref{thm:persistence-stability} is the one new mathematical result. It controls a filtration indexed by enclosed posterior mass under \emph{simultaneous} perturbation of the log-density index and of the measure that defines the index, and returns an interleaving in the mass coordinate itself. Standard stability controls a filtration indexed by a fixed function under perturbation of that function only \cite{CohenSteiner2007,Chazal2016}; DTM filtrations use a mass parameter to regularize a distance function and are stable in Wasserstein distance \cite{ChazalCohenSteinerMerigot2011,Anai2019}; cluster-tree and level set statistics control estimated level sets at fixed density thresholds \cite{ChaudhuriDasgupta2010,Fasy2014,BobrowskiMukherjeeTaylor2017}. None controls the reparameterization of the index by the estimated measure, which is exactly what a grid refinement changes. Corollary~\ref{cor:mass-rank-consistency} makes the bound vanish under refinement; Corollary~\ref{cor:atomic-mass-rank} removes atomlessness, so the statement applies verbatim to weighted grids.

\emph{Representation level.} The interface of Eq.~\eqref{eq:dug-operational} is a declaration discipline, not a theorem: posterior mass geometry, the future-law family and labelled transport are recorded separately with their provenance, so a reduction is checked against the task rather than assumed. Propositions~\ref{prop:mi-minimality} and \ref{thm:skeleton-obstruction} and Theorem~\ref{thm:covariance-obstruction} make the discipline non-vacuous by exhibiting models that agree on a reduced report and disagree on the scientific output.

\emph{Application level.} Three computed findings are new and are not consequences of the theorem: the transport-conditioned barcode of Section~\ref{sec:barcode} has exactly two essential bars with no component filtering; the observation-design calculation of Section~\ref{sec:design} yields two criteria whose epoch rankings are uncorrelated over a dense grid; and the ladder of Section~\ref{sec:pcr3bp-scaling} shows the unresolved PCRTBP boundary mass falling only like $h^{0.64}$, quantifying the interlaced first-hit boundary rather than merely bounding it.

Section~\ref{sec:setting} states the setting and notation. Section~\ref{sec:definition} defines the mass filtration, gives the continuum, refinement and atomic stability results, and states the operational interface. Sections~\ref{sec:topology-change}--\ref{sec:representation-diagnostics} contain the supporting analysis and the diagnostics, including an explicit statement in Section~\ref{sec:theorem-scope} of which computed quantity each result does and does not certify. Sections~\ref{sec:classical}--\ref{sec:computation} position the construction and define the discrete report. Sections~\ref{sec:numerical} and \ref{sec:pcr3bp} give the two benchmarks. Supporting covariance, coordinate and factorization arguments are consolidated in the appendices.

\section{Mathematical setting}
\label{sec:setting}

\subsection{State dynamics, admissibility and observations}

Let \(\X\) be a connected, second-countable smooth $n$-manifold, possibly with boundary, with Borel $\sigma$-algebra \(\mathcal B(\X)\). At time \(t\) the physically admissible states form a Borel set \(\A_t\subseteq\X\), through which hard boundaries are represented even when $\X$ has none. Before collisions or other hybrid events the deterministic dynamics are a \(C^2\) flow
\begin{equation}
 \Phi_{t,s}:\X\longrightarrow\X,\qquad s\ge t,
 \label{eq:flow}
\end{equation}
with \(\Phi_{s,r}\circ\Phi_{t,s}=\Phi_{t,r}\). In stochastic systems, \(\Phi\) is replaced by a Markov transition kernel \(K_{t,s}(x,\dd x')\). Hybrid dynamics may be represented on an augmented state space with discrete modes and transition maps.

Let \(Y_{0:t}\) denote the observations available through time \(t\). Bayesian inference produces a posterior probability measure
\begin{equation}
 \mu_t(B)=\Pp(x_t\in B\mid Y_{0:t}),
 \qquad B\in\mathcal B(\X),
 \label{eq:posterior}
\end{equation}
with \(\mu_t(\A_t)=1\). A future experiment over the horizon \([t,t+T]\) is a family of probability laws
\begin{equation}
 x\longmapsto P_{t,x}^{T}
 \label{eq:futurelaw}
\end{equation}
on a measurable output space \((\Omega_T,\mathcal F_T)\): a path of observations, a target-plane coordinate, a conjunction observable, a sequence of detections or another prediction product. An admissible experiment is fixed independently of the desired conclusion by the measurement protocol (sensor map, cadence, bandwidth, noise law, censoring, released data product). One may not add a target indicator merely to make a factorization hold unless it is actually observed; conversely, if the protocol supplies a full trajectory, any deterministic label measurable from it is derived from the experiment. This prevents both artificial impoverishment and post-hoc enlargement of the output space.

\subsection{Reference measures and the reference-relative density}

A density is not a scalar under a nonlinear coordinate change, so a highest-density region is ambiguous unless the volume convention is transformed together with the probability measure. We introduce a positive reference measure \(\nu_t\), locally equivalent to Lebesgue measure in every chart, and require \(\mu_t\ll\nu_t\). The construction is covariant under simultaneous pushforward, but is \emph{not} invariant under replacing $\nu_t$ by a different physical reference measure: such a replacement can change density rankings and topology. The \emph{reference-relative posterior density} is
\begin{equation}
 r_t=\frac{\dd\mu_t}{\dd\nu_t},
 \qquad
 f_t=-\log r_t,
 \label{eq:score-density}
\end{equation}
and $f_t$ is called the \emph{log-density index}. The word \emph{score} is reserved throughout for the derivative of a log likelihood, as in Eq.~\eqref{eq:ordinary-score}; the two objects are unrelated, and the earlier term ``posterior score density'' is not used. For Hamiltonian orbital dynamics, a natural choice of $\nu_t$ is Liouville volume restricted to the admissible phase space; other choices are allowed, but must be declared and transported under coordinate transformations. A prior measure can also serve as \(\nu_t\), in which case \(r_t\) is a posterior-to-prior density ratio.

\begin{assumption}[Regular foundational regime]
\label{ass:regular}
Unless stated otherwise, we assume:
\begin{enumerate}[label=(A\arabic*)]
\item \(\mu_t\ll\nu_t\), \(r_t\) has a continuous representative on \(\A_t\), and \(\mu_t(\A_t)=1\);
\item the selected continuum mass levels have no density plateau, that is \(\mu_t(\{r_t=c\})=0\) at their thresholds;
\item the future-law family \(x\mapsto P_{t,x}^T\) is differentiable in quadratic mean;
\item whenever smooth transport results are invoked, \(\Phi_{t,s}\) is a diffeomorphism on an open neighborhood of the relevant credible set;
\item when a continuum barcode is invoked, the corresponding module over the declared regular mass interval is q-tame over a fixed coefficient field \(\F\); when only a finite mass grid is computed, the claim is restricted to that finite persistence module.
\end{enumerate}
\end{assumption}

\begin{remark}[Scope of the smooth regime and hard boundaries]
\label{rem:hard-boundaries}
Hard admissibility constraints are encoded in $\A_t$ and need not have a globally smooth boundary; Assumption~\ref{ass:regular} is invoked componentwise, transversality entering only through Theorem~\ref{thm:isotopy} and the no-plateau condition only at selected thresholds. A finite grid or weighted particle approximation, such as the short-arc benchmark of Section~\ref{sec:numerical}, is a numerical representative rather than an instance of the smooth-density regime: its topology must be assessed under mass-level and mesh refinement, and Theorem~\ref{thm:isotopy} does not certify boundary-touching grid levels.
\end{remark}

Plateaus, singular posteriors and noninvertible dynamics require set-valued thresholds, stratified measures or hybrid transition operators and are not hidden inside the smooth notation below.

Unless a section explicitly suppresses the estimation epoch, credible sets are written $C_t^m$, transport classes $K_{t,T}^{\lambda}$ and conditioned sets $\Pi_{t,T}^{m,\lambda}$. The short-arc example sets $t=0$ and writes $C_0^m$; the autonomous PCR3BP section suppresses $t$ after declaring the fixed section.

Table~\ref{tab:notation} collects the recurring symbols; it is placed here rather than in an appendix because the notation is used from Section~\ref{sec:definition} onward.

\begin{table*}[t]
\centering
\small
\caption{\footnotesize Principal notation.}
\label{tab:notation}
\begin{tabularx}{0.95\textwidth}{@{}p{0.20\textwidth}X@{}}
\toprule
Symbol & Meaning \\
\midrule
\(\X,\ \A_t\) & state manifold; physically admissible state set \\
\(\mu_t,\ \nu_t\) & posterior probability measure; declared reference measure \\
\(r_t=\dd\mu_t/\dd\nu_t\) & reference-relative posterior density \\
\(f_t=-\log r_t\) & log-density index; $\rho_f$ is its mass rank \\
\(C_t^m,\ \Cfam_t\) & credible set of enclosed mass \(m\); mass-indexed credible filtration \\
\(P_{t,x}^T\) & law of the future output over the horizon \([t,t+T]\), given the state \(x\) \\
\(g_{t,T}\) & finite-horizon Fisher pseudometric, the pullback of \(P_{t,x}^T\) \\
\(\Hfam_t\) & persistence module of the credible filtration \\
\(\Kfam_{t,T},\ \Lambda\) & labelled transport skeleton; its label set \\
\(\Pfam_{t,T}\) & transport-conditioned persistence modules \\
\(E_{t,T}(H),\ \Pi_{t,T}^m(H)\) & states hitting \(H\) by \(T\); its mass-indexed conditioned filtration \\
\(\Gdug_t^T\) & the DUG report of Eq.~\eqref{eq:dug-operational} \\
\(\varepsilon_f,\ \varepsilon_\mu\) & uniform log-density error; total-variation error \\
\(\omega_f,\ b,\ b_h\) & index-concentration modulus; continuum and discrete mass budgets \\
\(U_h,\ D_h\) & unresolved and directly disagreeing nested-grid mass \\
\bottomrule
\end{tabularx}
\end{table*}

\section{Mass-ranked credible geometry and the DUG interface}
\label{sec:definition}

\subsection{Mass-indexed credible filtration}

Let
\begin{equation}
 G_t(c)=\mu_t\{x:r_t(x)\ge c\}.
 \label{eq:enclosed-mass-map}
\end{equation}
Call $m$ a \emph{regular mass} when $m=G_t(c)$ for a threshold $c$ that is a regular value of $r_t$ on the relevant strata with $\mu_t\{r_t=c\}=0$; this image-based definition avoids assuming every $m\in(0,1)$ is attained. If the law of $r_t$ is atomless then $G_t$ is continuous and every interior mass is attained.

For \(m\in(0,1)\), define the generalized threshold
\begin{equation}
 c_t(m)=\sup\left\{c\ge0:\mu_t\bigl(\{x\in\A_t:r_t(x)\ge c\}\bigr)\ge m\right\}.
 \label{eq:threshold}
\end{equation}
At a regular mass level, the \emph{credible admissible set} is
\begin{equation}
 C_t^m=\{x\in\A_t:r_t(x)\ge c_t(m)\},
 \qquad \mu_t(C_t^m)=m.
 \label{eq:credible}
\end{equation}
The family \(\Cfam_t=\{C_t^m\}_{m\in\mathcal M_t}\) over a chosen set of regular masses is the \emph{credible filtration}; indexing it by enclosed mass rather than by an arbitrary density value is what makes comparisons across times and numerical representations meaningful.

\begin{proposition}[Filtration property]
\label{prop:filtration}
Under Assumption~\ref{ass:regular}, if \(m_1<m_2\) then \(C_t^{m_1}\subseteq C_t^{m_2}\) up to a \(\mu_t\)-null set. Moreover, \(\mu_t(C_t^m)=m\) at every regular mass in the image of Eq.~\eqref{eq:enclosed-mass-map}.
\end{proposition}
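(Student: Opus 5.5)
The plan is to reduce both claims to monotonicity and right-continuity properties of the enclosed-mass map $G_t$ of Eq.~\eqref{eq:enclosed-mass-map}, and then read off the set inclusion and the mass identity from the definition of the generalized threshold $c_t(m)$ in Eq.~\eqref{eq:threshold}.

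First, record that $G_t$ is non-increasing in $c$: the superlevel sets $\{r_t\ge c\}$ shrink as $c$ grows. It is also left-continuous, because $\{r_t\ge c\}=\bigcap_{c'<c}\{r_t\ge c'\}$ and $\mu_t$ is a finite measure, so continuity from above applies. Consequently the set $S(m)=\{c\ge0: G_t(c)\ge m\}$ is an interval $[0,c_t(m)]$ containing its supremum, and hence $G_t(c_t(m))\ge m$. The set is nonempty since $G_t(0)=\mu_t(\A_t)=1\ge m$, and the supremum is finite because $G_t(c)\to\mu_t\{r_t=\infty\}=0$, using that $r_t$ is finite-valued $\nu_t$-a.e. by (A1).

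Second, monotonicity of the threshold: if $m_1<m_2$ then $S(m_2)\subseteq S(m_1)$, so $c_t(m_2)\le c_t(m_1)$. It follows that $\{x\in\A_t:r_t\ge c_t(m_1)\}\subseteq\{x\in\A_t:r_t\ge c_t(m_2)\}$. This inclusion actually holds pointwise for the continuous representative fixed in (A1). The ``up to a $\mu_t$-null set'' qualifier only accounts for the choice of representative of $r_t=\dd\mu_t/\dd\nu_t$: two versions differ on a $\nu_t$-null, hence $\mu_t$-null, set, so the inclusion is representative-independent only modulo such sets. I would state this explicitly.

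Third, the mass identity at a regular mass $m=G_t(c)$ with $\mu_t\{r_t=c\}=0$. This is the step needing care, because $G_t$ may have flat stretches, so $c_t(m)$ could exceed $c$. By definition $c\in S(m)$, so $c_t(m)\ge c$. On $[c,c_t(m)]$ we have $m\le G_t(c_t(m))\le G_t(c)=m$, so $G_t$ is constant there. Therefore
\[
\mu_t\{c\le r_t<c_t(m)\}=G_t(c)-G_t(c_t(m))=0,
\]
and hence $C_t^m=\{r_t\ge c_t(m)\}$ coincides with $\{r_t\ge c\}$ up to a $\mu_t$-null set, giving $\mu_t(C_t^m)=G_t(c_t(m))=m$.

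The no-plateau condition (A2) is what guarantees that a regular mass is attained exactly rather than jumped over. At an atom $\mu_t\{r_t=c\}>0$ the value $G_t(c)$ includes the plateau, and intermediate masses would not be realized. I expect this bookkeeping, handling the possible gap between $c$ and $c_t(m)$, to be the only real obstacle. The atomless remark (continuity of $G_t$) can be noted as a corollary but is not needed for the proof.
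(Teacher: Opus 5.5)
Your proof is correct and follows the paper's argument. Monotonicity of $G_t$ orders the thresholds and hence the superlevel sets. The possible gap between the selected threshold $c$ and $c_t(m)$ is an index interval carrying no $\mu_t$-mass, which is exactly the paper's ``$c_t(m)=c$ up to a score interval carrying no mass.'' Your left-continuity step supplies the attainment $G_t(c_t(m))\ge m$ that the paper leaves implicit in ``the generalized-inverse definition.''

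One small correction to your closing remark: your argument never uses $\mu_t\{r_t=c\}=0$. The mass identity holds for every $m$ in the image of $G_t$. The no-plateau condition only concerns masses outside that image, where $c_t(m)$ sits on a plateau and $\mu_t(C_t^m)$ overshoots $m$, and the image-based definition of a regular mass already excludes those.
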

\begin{proof}
The map $G_t$ is nonincreasing. Hence \(m_1<m_2\) implies \(c_t(m_1)\ge c_t(m_2)\), which gives the inclusion of superlevel sets. By the definition of a regular mass there is a selected threshold $c$ with $G_t(c)=m$ and no mass on its level set. Monotonicity and the generalized-inverse definition then give $c_t(m)=c$ up to a score interval carrying no mass, so $\mu_t(C_t^m)=G_t(c)=m$.
\end{proof}

\begin{remark}[Plateaus]
\label{rem:plateaus}
If \(\mu_t(\{r_t=c_t(m)\})>0\), exact mass can be achieved by adding a measurable subset of the plateau. The resulting credible set is not canonical. In that regime the DUG should retain the interval of admissible sets or a tie-breaking rule, and topological conclusions that depend on the selected subset must be reported as nonunique.
\end{remark}

\subsection{Finite-horizon Fisher pseudometric}

Differentiability in quadratic mean means that, in a chart and for every tangent direction \(u\in T_x\X\), there is a score \(\ell_{t,T,x}(u)\in L^2(P_{t,x}^T)\), linear in \(u\), such that the square-root density has first variation \(\tfrac12\ell_{t,T,x}(u)\sqrt{p_{t,x}^T}\) \cite{VanderVaart1998,Amari2000}. We define
\begin{equation}
 g_{t,T,x}(u,v)
 =\E_{P_{t,x}^{T}}
 \left[\ell_{t,T,x}(u)\,\ell_{t,T,x}(v)\right].
 \label{eq:fisher}
\end{equation}
Under common domination, differentiability in $x$ and a local integrable envelope, the ordinary score is
\begin{equation}
 \ell_{t,T,x}(u)=\partial_u\log p_{t,x}^T,
 \label{eq:ordinary-score}
\end{equation}
and \eqref{eq:fisher} is the Fisher pullback tensor. We use the term \emph{pseudometric} because physically relevant directions can be unobservable and therefore have zero length.

\begin{proposition}[Identifiability meaning of the metric]
\label{prop:pseudometric}
For each \(x\), \(g_{t,T,x}\) is a symmetric positive-semidefinite bilinear form. Its null space is
\begin{equation}
 \mathcal N_{t,T,x}
 =\{u\in T_x\X:\ell_{t,T,x}(u)=0\ \ P_{t,x}^{T}\text{-a.s.}\}.
 \label{eq:nullspace}
\end{equation}
Thus \(u\in\mathcal N_{t,T,x}\) precisely when the future experiment is insensitive to the perturbation \(u\) to first order.
\end{proposition}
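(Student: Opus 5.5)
The argument is direct: the definition of $g_{t,T,x}$ in Eq.~\eqref{eq:fisher} is the Gram form of the linear map $u\mapsto \ell_{t,T,x}(u)$ into the Hilbert space $L^2(P_{t,x}^T)$. I will extract each claim from that structure.

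First, I establish that $g_{t,T,x}$ is well defined and bilinear. Differentiability in quadratic mean (Assumption~\ref{ass:regular}(A3)) supplies, for each $u\in T_x\X$, a score $\ell_{t,T,x}(u)\in L^2(P_{t,x}^T)$ that is linear in $u$. Hence the product $\ell_{t,T,x}(u)\ell_{t,T,x}(v)$ is integrable by Cauchy--Schwarz, so the expectation in Eq.~\eqref{eq:fisher} is finite. Bilinearity then follows from linearity of $u\mapsto\ell_{t,T,x}(u)$ together with linearity of the expectation. Symmetry is immediate from commutativity of the product inside the expectation. Positive semidefiniteness follows because $g_{t,T,x}(u,u)=\E_{P_{t,x}^T}[\ell_{t,T,x}(u)^2]\ge0$.

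Next comes the null-space characterization, where one point needs care. For a semidefinite form, the kernel $\{u:g(u,v)=0\ \forall v\}$ coincides with the isotropic set $\{u:g(u,u)=0\}$. To see this, apply Cauchy--Schwarz for semidefinite forms: $|g(u,v)|^2\le g(u,u)g(v,v)$, so $g(u,u)=0$ forces $g(u,v)=0$ for all $v$. Since $g(u,u)=\|\ell_{t,T,x}(u)\|^2_{L^2(P_{t,x}^T)}$, the condition $g(u,u)=0$ holds exactly when $\ell_{t,T,x}(u)=0$ $P_{t,x}^T$-a.s. This yields Eq.~\eqref{eq:nullspace}. I will also note that the score is defined only as an $L^2$ class, that is, only up to $P_{t,x}^T$-null sets, so the "a.s." qualifier is the correct formulation. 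If the score is defined via a dominating measure, points where $p_{t,x}^T=0$ carry no $P$-mass and are irrelevant.

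The main obstacle is the interpretive sentence: I must make "insensitive to first order" precise. The plan is to use the DQM expansion
\[
\sqrt{p_{t,x+su}^T}=\sqrt{p_{t,x}^T}+\tfrac{s}{2}\ell_{t,T,x}(u)\sqrt{p_{t,x}^T}+o(s)
\]
in $L^2$ of the dominating measure. From this, the squared Hellinger distance between $P_{t,x+su}^T$ and $P_{t,x}^T$ equals $\tfrac{s^2}{4}g_{t,T,x}(u,u)+o(s^2)$ (up to the chosen normalization constant). Thus $u\in\mathcal N_{t,T,x}$ if and only if the Hellinger distance along $u$ is $o(s)$, meaning the future law has vanishing first variation in the square-root (Hellinger) embedding. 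I will state the "precisely when" in this sense, which is the natural first-order notion.

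One subtlety remains. The DQM remainder is only guaranteed along a path, or in a chart, so I will fix a chart around $x$ and interpret $x+su$ in coordinates. Chart independence then follows because the score transforms linearly with the differential of the chart change.
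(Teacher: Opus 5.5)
Your proposal is correct and follows the paper's own argument: both rest on the identity $g_{t,T,x}(u,u)=\|\ell_{t,T,x}(u)\|^2_{L^2(P_{t,x}^T)}$, from which bilinearity, symmetry, positive semidefiniteness and the null-space characterization follow. Your Cauchy--Schwarz identification of the radical of $g_{t,T,x}$ with its isotropic set, and your Hellinger-expansion reading of ``insensitive to first order'', make explicit two points that the paper's three-line proof leaves implicit.
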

\begin{proof}
Bilinearity and symmetry follow from linearity of the score and the \(L^2\) inner product. Positive semidefiniteness follows from
\(g_{t,T,x}(u,u)=\|\ell_{t,T,x}(u)\|_{L^2(P_{t,x}^T)}^2\ge0\). Equality holds exactly when the score vanishes almost surely.
\end{proof}

For deterministic dynamics, a mathematically well-defined continuous-time Gaussian experiment is
\begin{equation}
 \dd Z_s=h_s(\Phi_{t,s}(x))\,\dd s+R_s^{1/2}\,\dd W_s,
 \qquad s\in[t,t+T],
\label{eq:gaussian-observation}
\end{equation}
where \(R_s\) is positive definite. Its Fisher pullback has the computable form
\begin{equation}
 g_{t,T,x}
 =\int_t^{t+T}
 D(h_s\circ\Phi_{t,s})(x)^{\T}R_s^{-1}
 D(h_s\circ\Phi_{t,s})(x)\,\dd s.
 \label{eq:gramian}
\end{equation}
This is a finite-horizon nonlinear observability Gramian, related to differential observability constructions \cite{HermannKrener1977}; it measures distinguishability in the chosen experiment, not sensitivity of the flow. For discrete epochs the integral becomes a sum. The metric is local:
\begin{equation}
 D_{\mathrm{KL}}\!\left(P_{t,x}^T\,\middle\|\,P_{t,x+\delta u}^T\right)
 =\frac{\delta^2}{2}g_{t,T,x}(u,u)+o(\delta^2).
 \label{eq:local-kl}
\end{equation}
Finite separation of distant posterior components is therefore a property of the future-law family, not of the Fisher tensor alone.

\subsection{Persistent topology of the credible filtration}

Fix one coefficient field \(\F\) for every module being compared; continuum groups mean singular homology, which on compact triangulable sets agrees with the simplicial computation. For \(m_1<m_2\) the inclusion \(C_t^{m_1}\hookrightarrow C_t^{m_2}\) induces
\begin{equation}
 \iota_{m_1,m_2}^{(k)}:
 H_k(C_t^{m_1};\F)\longrightarrow H_k(C_t^{m_2};\F).
 \label{eq:persistence-map}
\end{equation}
The collection
\begin{equation}
 \Hfam_t^{(k)}=
 \left(\{H_k(C_t^m;\F)\}_{m\in\mathcal M_t},
       \{\iota_{m_1,m_2}^{(k)}\}_{m_1<m_2}\right)
 \label{eq:persistence-module}
\end{equation}
is the degree-\(k\) persistence module of the credible filtration; a q-tame continuum module admits a barcode \(\Dgm_k(\Cfam_t)\) \cite{Edelsbrunner2002,Carlsson2009,Chazal2016}, whereas on a finite mass grid the output is the finite restricted module and its grid-dependent decomposition, which is not evidence for unsampled levels. When boundary strata or singular subsets matter, a compatible Whitney stratification is declared as additional input. Using the whole filtration avoids attaching physical significance to a single confidence level: a component that persists across a broad mass range is distinguished from one created by a narrow threshold interval or by numerical noise.

\subsection{Mass-rank stability: main theorem}
\label{sec:stability}

Standard stability controls filtrations indexed by a fixed function, typically
the raw index $f=-\log r$, under perturbation of that function
\cite{CohenSteiner2007,Chazal2016}. Here the filtration is indexed instead by
enclosed posterior mass, so that the index is itself a functional of the
measure: under numerical approximation both $f$ and the law of $f$ move, and
the reparameterization is part of the perturbation. For a positive density
$r=\dd\mu/\dd\nu$ on a common compact domain, define
\begin{equation}
 f=-\log r,
 \qquad F_f(c)=\mu\{f\le c\},
 \qquad \rho_f=F_f\circ f,
 \label{eq:mass-rank}
\end{equation}
and the two measure-specific index-concentration moduli
\[
 \omega_f(a)=\sup_{c\in\R}\mu\{y:|f(y)-c|\le a\}
 \]
\begin{equation}
 \omega_{\widehat f}(a)=\sup_{c\in\R}\widehat\mu
 \{y:|\widehat f(y)-c|\le a\}.
 \label{eq:score-modulus}
\end{equation}
The rank sublevel sets $C_f^m=\{\rho_f\le m\}$ are defined even with plateaux. When the law of the index is atomless they agree with the exact-mass credible filtration under the convention of Eq.~\eqref{eq:credible}; with a plateau they retain the complete tied level and can overshoot $m$, as discussed in Remark~\ref{rem:plateaus}.
Notice that $\omega_f(0)=\sup_c\mu\{f=c\}$ can be positive. Therefore the bound below need not vanish for atomic indices or continuum plateaux; this is precisely the content excluded by Assumption~\ref{ass:regular}(A2).

Two error levels enter, and they are kept typographically distinct: $\varepsilon_f$ bounds the uniform log-density error and $\varepsilon_\mu$ bounds the total-variation error of the measure.

\begin{theorem}[Mass-rank interleaving and persistence stability]
\label{thm:persistence-stability}
Let $\mu$ and $\widehat\mu$ have positive continuous densities $r$ and
$\widehat r$ relative to the same declared reference measure on a compact domain $X$.
Put $f=-\log r$, $\widehat f=-\log\widehat r$ and
$\widehat\rho_{\widehat f}(x)=\widehat\mu\{\widehat f\le\widehat f(x)\}$.
If
\begin{equation}
 \|f-\widehat f\|_\infty\le\varepsilon_f,
 \qquad
 d_{\TV}(\mu,\widehat\mu)\le\varepsilon_\mu,
 \label{eq:mass-rank-assumptions}
\end{equation}
then, with
\begin{equation}
 b=\varepsilon_\mu+\min\{\omega_f(\varepsilon_f),
                       \omega_{\widehat f}(\varepsilon_f)\},
 \label{eq:mass-rank-budget}
\end{equation}
\begin{equation}
 \|\rho_f-\widehat\rho_{\widehat f}\|_\infty\le b.
 \label{eq:mass-rank-bound}
\end{equation}
Extend the mass filtrations by $C^m=\varnothing$ for $m<0$ and $C^m=X$
for $m>1$. Then
\begin{equation}
 C_f^{m-b}\subseteq C_{\widehat f}^{m}
 \subseteq C_f^{m+b},
 \qquad
 C_{\widehat f}^{m-b}\subseteq C_f^{m}
 \subseteq C_{\widehat f}^{m+b}.
 \label{eq:mass-rank-interleaving}
\end{equation}
Thus their homology functors $\R\to\mathrm{Vect}_{\F}$ are $b$-interleaved. If they are q-tame, then for every degree $k$,
\begin{equation}
 d_I\!\left(\Hfam_f^{(k)},\Hfam_{\widehat f}^{(k)}\right)\le b,
 \qquad
 d_B\!\left(\Dgm_k(\Cfam_f),\Dgm_k(\Cfam_{\widehat f})\right)\le b,
 \label{eq:persistence-stability}
\end{equation}
with all modules and diagrams parameterized by enclosed mass.
\end{theorem}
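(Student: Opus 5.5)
The proof has three steps: a pointwise bound on the difference of the two rank functions, a translation of that bound into inclusions of sublevel sets, and an appeal to the algebraic stability theorem.

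\textbf{Step 1: the sup bound \eqref{eq:mass-rank-bound}.} Fix $x\in X$ and write $c=f(x)$, $\widehat c=\widehat f(x)$, so that $|c-\widehat c|\le\varepsilon_f$. Split the difference of ranks as
\[
 \rho_f(x)-\widehat\rho_{\widehat f}(x)
 =\bigl[\mu\{f\le c\}-\mu\{\widehat f\le \widehat c\}\bigr]
 +\bigl[\mu\{\widehat f\le\widehat c\}-\widehat\mu\{\widehat f\le\widehat c\}\bigr].
\]
The second bracket is a difference of two measures evaluated on the same Borel set. It is therefore bounded by $d_{\TV}(\mu,\widehat\mu)\le\varepsilon_\mu$, using the sup-over-sets convention for total variation.

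For the first bracket, both sets $\{f\le c\}$ and $\{\widehat f\le\widehat c\}$ contain $\{f\le c-\varepsilon_f\}$. Indeed, if $f(y)\le c-\varepsilon_f$ then $\widehat f(y)\le f(y)+\varepsilon_f\le c\le \widehat c+\varepsilon_f$. Unfortunately this chain only gives $\widehat f(y)\le\widehat c+\varepsilon_f$, not $\widehat f(y)\le\widehat c$. The cleaner route is to compare through the common value $\widehat c$ rather than through $c$:
\[
 \{f\le \widehat c-\varepsilon_f\}\subseteq\{\widehat f\le\widehat c\}\subseteq\{f\le\widehat c+\varepsilon_f\},
 \qquad
 \{f\le\widehat c-\varepsilon_f\}\subseteq\{f\le c\}\subseteq\{f\le \widehat c+\varepsilon_f\}.
\]
Both $\{f\le c\}$ and $\{\widehat f\le\widehat c\}$ are thus sandwiched between the same pair of sets. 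Their symmetric difference lies in $\{|f-\widehat c|\le\varepsilon_f\}$, whose $\mu$-mass is at most $\omega_f(\varepsilon_f)$.

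To obtain $\omega_{\widehat f}$ in place of $\omega_f$, I will use the other splitting: first pass from $\mu$ to $\widehat\mu$ on the set $\{f\le c\}$, at TV cost, and then sandwich under $\widehat\mu$ using $\widehat f$-levels around $c$. This gives $\varepsilon_\mu+\omega_{\widehat f}(\varepsilon_f)$. Taking the better of the two splittings yields the minimum in \eqref{eq:mass-rank-budget}.

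The main obstacle is exactly this bookkeeping. The two sandwiches must be centered on a common level so that a single concentration window of half-width $\varepsilon_f$, not $2\varepsilon_f$, suffices. I also have to check that closed windows, and a possibly positive $\omega(0)$, are handled by the definition \eqref{eq:score-modulus}.

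\textbf{Step 2: the interleaving \eqref{eq:mass-rank-interleaving}.} This step is immediate from Step 1. If $\rho_f(x)\le m-b$, then $\widehat\rho_{\widehat f}(x)\le\rho_f(x)+b\le m$, so $C_f^{m-b}\subseteq C_{\widehat f}^m$. The other three inclusions follow in the same way by symmetry of the sup bound. The conventions $C^m=\varnothing$ for $m<0$ and $C^m=X$ for $m>1$ keep the inclusions valid near the ends of the mass range, because ranks take values in $[0,1]$.

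\textbf{Step 3: homology and stability.} Since inclusions compose, applying $H_k(\,\cdot\,;\F)$ to the inclusions of Step 2 gives families of maps $H_k(C_f^m)\to H_k(C_{\widehat f}^{m+b})$ and $H_k(C_{\widehat f}^m)\to H_k(C_f^{m+b})$. These maps commute with the structure maps, and their composites equal the internal maps of shift $2b$. This is precisely a $b$-interleaving of the persistence modules indexed by $\R$, so $d_I\le b$. For q-tame modules, the algebraic stability (isometry) theorem of \cite{Chazal2016} then gives $d_B\le d_I\le b$, which is \eqref{eq:persistence-stability} in the mass parameterization.
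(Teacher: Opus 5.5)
Your proof is correct and is essentially the paper's: a total-variation step on the fixed set $\{\widehat f\le\widehat f(x)\}$ plus one radius-$\varepsilon_f$ concentration window, the swap of $(f,\mu)$ with $(\widehat f,\widehat\mu)$ for the $\omega_{\widehat f}$ bound, literal inclusions giving a $b$-interleaving, and algebraic stability for $d_B\le d_I$ (only that inequality is needed, not the full isometry theorem). The one cosmetic difference is that you sandwich both $\{f\le f(x)\}$ and $\{\widehat f\le\widehat f(x)\}$ between the $f$-levels $\widehat f(x)\pm\varepsilon_f$, which gives the two-sided bound at once, whereas the paper compares with $f$-levels $f(x)\pm2\varepsilon_f$ and uses windows centred at $f(x)\pm\varepsilon_f$; in a write-up, delete your retracted first claim that $\{\widehat f\le\widehat f(x)\}\supseteq\{f\le f(x)-\varepsilon_f\}$, which is false in general.
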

\begin{proof}
The two one-sided comparisons, the independent bounds using each concentration modulus, endpoint extensions and persistence-category step are proved line by line in Appendix~\ref{app:mass-rank-proof}.
\end{proof}

\begin{corollary}[Quantitative refinement consistency]
\label{cor:mass-rank-consistency}
If the push-forward $f_\#\mu$ has a Lebesgue density bounded by $L$, then
\begin{equation}
 \omega_f(\varepsilon_f)\le 2L\varepsilon_f.
 \label{eq:bounded-score-density}
\end{equation}
Consequently $b\le\varepsilon_\mu+2L\varepsilon_f$ whenever this modulus is the smaller one in Eq.~\eqref{eq:mass-rank-budget}. For an approximation sequence with $\varepsilon_{\mu,h}\to0$, $\varepsilon_{f,h}\to0$ and a uniform bound $L$, the mass-rank interleaving and bottleneck bound converge to zero.
\end{corollary}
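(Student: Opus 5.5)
The corollary has two parts: a concentration bound for $\omega_f$, and a limit statement obtained by substituting that bound into Theorem~\ref{thm:persistence-stability}.

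\emph{Step 1: the concentration bound.} Let $p$ denote the Lebesgue density of the push-forward $f_\#\mu$ on $\R$, with $p\le L$. For any $c\in\R$,
\[
 \mu\{y:|f(y)-c|\le\varepsilon_f\}
 =(f_\#\mu)\bigl([c-\varepsilon_f,c+\varepsilon_f]\bigr)
 =\int_{c-\varepsilon_f}^{c+\varepsilon_f}p(s)\,\dd s
 \le 2L\varepsilon_f .
\]
Taking the supremum over $c$ gives Eq.~\eqref{eq:bounded-score-density}. Atomlessness of the law of $f$, hence $\omega_f(0)=0$, comes for free from absolute continuity; no regularity of $f$ itself is needed.

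\emph{Step 2: the budget.} Since $\min\{\omega_f(\varepsilon_f),\omega_{\widehat f}(\varepsilon_f)\}\le\omega_f(\varepsilon_f)$, Eq.~\eqref{eq:mass-rank-budget} gives
\[
 b\le\varepsilon_\mu+2L\varepsilon_f
\]
unconditionally. The qualifier ``whenever this modulus is the smaller one'' only says that the inequality can be loose otherwise. It is not needed for the upper bound, and I would state this explicitly.

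\emph{Step 3: the limit along a sequence.} For each $h$, apply Theorem~\ref{thm:persistence-stability} to the pair $(\mu,\widehat\mu_h)$ with errors $\varepsilon_{f,h}$ and $\varepsilon_{\mu,h}$. It gives
\[
 \|\rho_f-\widehat\rho_{\widehat f_h}\|_\infty\le b_h,
\]
the $b_h$-interleaving of Eq.~\eqref{eq:mass-rank-interleaving}, and, when the modules are q-tame, interleaving and bottleneck distances at most $b_h$. By Step 2, $b_h\le\varepsilon_{\mu,h}+2L\varepsilon_{f,h}\to0$.

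The main point to watch is which modulus the uniform bound $L$ controls. If $L$ bounds the density of the fixed target law $f_\#\mu$, Step 1 applies directly to $\omega_f$ and nothing about the approximants is required. If instead one wanted to use $\omega_{\widehat f_h}$, one would need a uniform-in-$h$ density bound on $(\widehat f_h)_\#\widehat\mu_h$. I will read ``uniform bound $L$'' in the first sense, where it is trivially uniform, and remark that the second reading also works by symmetry of the $\min$ in Eq.~\eqref{eq:mass-rank-budget}.

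The theorem's standing hypotheses must also hold for every $h$: positive continuous densities on a common compact domain with respect to the same declared reference measure. For the bottleneck statement, q-tameness is needed at each $h$. I will assume these as part of the phrase ``approximation sequence''.
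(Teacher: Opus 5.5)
Your proposal is correct and follows the paper's proof. The paper's argument is exactly your Step~1: every window $[c-\varepsilon_f,c+\varepsilon_f]$ has length $2\varepsilon_f$ and hence push-forward mass at most $2L\varepsilon_f$. The budget bound and the vanishing of the interleaving and bottleneck bounds are then read off directly from Theorem~\ref{thm:persistence-stability}.

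Your remark about the budget is also right. Since the minimum in Eq.~\eqref{eq:mass-rank-budget} is bounded by either modulus, $b\le\varepsilon_\mu+2L\varepsilon_f$ holds unconditionally, and the ``whenever'' qualifier in the statement is not needed for this upper bound.
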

\begin{proof}
Every window $[c-\varepsilon_f,c+\varepsilon_f]$ has length $2\varepsilon_f$, so its push-forward probability is at most $2L\varepsilon_f$.
\end{proof}

\begin{remark}[When the bound is informative, and how to estimate it]
\label{rem:informative}
Validity is not the same as usefulness: a budget $b$ larger than the mass persistence of the features of interest cannot select a resolution. Equation~\eqref{eq:mass-rank-budget} is informative exactly when the log-density index does not concentrate. Three sufficient regimes are worth naming. (i) If $f_\#\mu$ has a density bounded by $L$, Corollary~\ref{cor:mass-rank-consistency} gives $b\le\varepsilon_\mu+2L\varepsilon_f$, so a first-order refinement in the density gives a first-order budget. (ii) On a compact domain with $\|\nabla f\|\ge\kappa>0$ outside a $\mu$-null critical set and bounded reference volume, the coarea formula bounds $L$ by $\nu(X)\sup r/\kappa$. (iii) If the index has a plateau of mass $p$, then $\omega_f(0)\ge p$ and no refinement can bring $b$ below $p$; this is the failure mode that Assumption~\ref{ass:regular}(A2) excludes and that atomic grids always exhibit. In practice $\omega_f$ need not be bounded analytically: it is the supremum of a sliding-window probability and is computed exactly on a weighted grid by Eq.~\eqref{eq:empirical-omega}, so a practitioner can read the concentration penalty off the data before deciding whether the budget is usable. Table~\ref{tab:mass-rank-check} reports two cases in which the penalty dominates.
\end{remark}

\begin{corollary}[Finite atomic mass-rank theorem]
\label{cor:atomic-mass-rank}
Let $X=\{x_1,\ldots,x_N\}$ carry a common positive atomic reference measure $\nu=\sum_i v_i\delta_{x_i}$, and let $\mu_i=r_i v_i$ and $\widehat\mu_i=\widehat r_i v_i$ be positive probability weights. If the two errors in Eq.~\eqref{eq:mass-rank-assumptions} are bounded by $\varepsilon_f$ and $\varepsilon_\mu$, then Eqs.~\eqref{eq:mass-rank-bound}--\eqref{eq:persistence-stability} hold for the finite rank filtrations $C_f^m=\{x_i:\rho_f(x_i)\le m\}$ with the discrete moduli in Eq.~\eqref{eq:score-modulus}. No atomlessness assumption is needed. A complete-tie superlevel mask is assigned its realized cumulative mass, at which it is exactly a rank-filtration set.
\end{corollary}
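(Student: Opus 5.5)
The plan is to rerun the proof of Theorem~\ref{thm:persistence-stability} with every integral replaced by a finite weighted sum, and to check that no step needs atomlessness, continuity of $F_f$, or topology of $X$. Fix $x\in X$ and compare $\rho_f(x)=\mu\{f\le f(x)\}$ with $\widehat\rho_{\widehat f}(x)=\widehat\mu\{\widehat f\le\widehat f(x)\}$. Changing the measure first gives
\[
|\mu\{f\le f(x)\}-\widehat\mu\{f\le f(x)\}|\le\varepsilon_\mu,
\]
because for any subset $S$ we have $|\mu(S)-\widehat\mu(S)|\le d_{\TV}$. For the index change, the bound $\|f-\widehat f\|_\infty\le\varepsilon_f$ yields the set inclusions
\[
\{f\le f(x)-2\varepsilon_f\}\subseteq\{\widehat f\le\widehat f(x)\}\subseteq\{f\le f(x)+2\varepsilon_f\}.
\]
A symmetric version of these inclusions is centred at $c=\widehat f(x)$ with half-width $\varepsilon_f$. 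The symmetric difference of $\{f\le f(x)\}$ and $\{\widehat f\le\widehat f(x)\}$ is therefore contained in a window $\{y:|\widehat f(y)-c|\le\varepsilon_f\}$, or in the analogous window for $f$. Its mass is then bounded by $\omega_{\widehat f}(\varepsilon_f)$ or $\omega_f(\varepsilon_f)$, after the measure swap has been placed on the appropriate side.

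I would do this carefully in two one-sided comparisons and choose the window centre to obtain half-width $\varepsilon_f$ rather than $2\varepsilon_f$. One such window is $\{y: f(x)<f(y),\ \widehat f(y)\le\widehat f(x)\}$. It forces $\widehat f(y)\in(\widehat f(x)-2\varepsilon_f,\widehat f(x)]$ and also $f(y)\in(f(x),f(x)+2\varepsilon_f]$. These intersect to a window of half-width $\varepsilon_f$ centred at the midpoint in the combined quantity, so this point needs checking. The discrete moduli are suprema over finitely many sliding windows, hence attained, and they may be positive at $0$ because of ties. Ties only enter through $\omega$, which is exactly why the budget $b$ stays valid.

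Having Eq.~\eqref{eq:mass-rank-bound} on the finite set, I would derive the interleaving Eq.~\eqref{eq:mass-rank-interleaving} pointwise. If $\rho_f(x)\le m-b$, then $\widehat\rho_{\widehat f}(x)\le m$, and the other three inclusions follow symmetrically. The endpoint extensions are handled trivially since ranks lie in $(0,1]$.

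The filtrations $C_f^m$ are finite subsets, or the simplicial complexes spanned on them once adjacency is declared. In either case the inclusions are inclusions of subcomplexes, so functoriality of homology gives $b$-interleaved modules. Each module is a finite step module, with finitely many distinct ranks, so it is pointwise finite-dimensional and hence q-tame. Isometry stability then gives $d_B\le d_I\le b$.

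The complete-tie remark requires showing that the superlevel mask $\{r\ge c\}$ equals $\{\rho_f\le m\}$ when $m$ is its cumulative mass. This holds because $\rho_f$ is constant on tied atoms and equals the cumulative mass through the whole tied block. The main obstacle I expect is the half-width bookkeeping in the index-change step, so that $\omega(\varepsilon_f)$ rather than $\omega(2\varepsilon_f)$ appears, while the measure swap and the choice of $f$ versus $\widehat f$ windows are kept consistent.
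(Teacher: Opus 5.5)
Your route is the paper's. The corollary follows because the argument of Appendix~\ref{app:mass-rank-proof} uses only the uniform index bound, total variation and the concentration moduli, none of which needs continuity or atomlessness, and a finite filtration is q-tame. Your handling of the interleaving and of functoriality is fine; the complexes must be induced subcomplexes of one fixed complex on $X$, which is what you describe. So are q-tameness and the complete-tie mask, where positivity of the weights gives $\rho_f(x)>m$ off the mask.

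One step is false as written, and it is exactly the half-width point you flag. Write $A_x=\{\widehat f\le\widehat f(x)\}$ and $B_x=\{f\le f(x)\}$. The symmetric difference $A_x\triangle B_x$ is not contained in $\{y:|\widehat f(y)-\widehat f(x)|\le\varepsilon_f\}$, nor in the analogous $f$-window. Its two halves have $\widehat f$-values in $(\widehat f(x)-2\varepsilon_f,\widehat f(x)]$ and $(\widehat f(x),\widehat f(x)+2\varepsilon_f]$, so together they need radius $2\varepsilon_f$. Bounding $\mu(A_x\triangle B_x)$ would therefore give only $\omega(2\varepsilon_f)$ or $2\omega(\varepsilon_f)$.

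What gives $\omega(\varepsilon_f)$ is that each one-sided comparison involves only one half, since $|\mu(A_x)-\mu(B_x)|\le\max\{\mu(A_x\setminus B_x),\mu(B_x\setminus A_x)\}$. On $A_x\setminus B_x$ the single constraint $f(y)\in(f(x),f(x)+2\varepsilon_f]$ already places $y$ in the closed window of radius $\varepsilon_f$ centred at $f(x)+\varepsilon_f$. Hence $\mu(A_x\setminus B_x)\le\omega_f(\varepsilon_f)$, and likewise $B_x\setminus A_x$ lies in the window centred at $f(x)-\varepsilon_f$. No intersection with the $\widehat f$-constraint is needed, so the point you marked as needing checking disappears.

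Combined with $|\widehat\mu(A_x)-\mu(A_x)|\le\varepsilon_\mu$, this is precisely the paper's pair of estimates. Repeat it with $(\widehat f,\widehat\mu)$, now swapping measures on $B_x$ so that every $\widehat f$-window is measured by $\widehat\mu$. This gives the $\omega_{\widehat f}$ bound, and hence the minimum in $b$.
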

\begin{proof}
The pointwise set inclusions in the proof of Theorem~\ref{thm:persistence-stability} use only the uniform index error and total variation, not continuity or atomlessness. A finite filtration is q-tame, so the interleaving and bottleneck conclusions follow directly.
\end{proof}

Although the index comparison produces endpoints separated by $2\varepsilon_f$, the added or removed interval itself has length $2\varepsilon_f$ and is contained in a radius-$\varepsilon_f$ window centered at a shifted index value; this is why Eq.~\eqref{eq:mass-rank-budget} contains $\omega(\varepsilon_f)$ rather than $\omega(2\varepsilon_f)$.

The continuum theorem and Corollary~\ref{cor:mass-rank-consistency} control a refinement limit when index distributions do not concentrate on plateaux. Corollary~\ref{cor:atomic-mass-rank} gives a rigorous finite-grid statement: the short-arc quantity $b_h$ in Section~\ref{sec:representation-diagnostics} certifies the two rank filtrations on their declared common validation grid. It does not by itself certify convergence to a continuum filtration; that separate claim still requires interpolation, quadrature and index-CDF envelopes uniform in $h$.

\paragraph{A continuous semi-analytical example.}
To show the first role directly, let $X=[-4,4]$ with Lebesgue reference measure and
\[
 r(x)=Z_0^{-1}\left\{\tfrac12\phi_{0.6}(x-1.2)
                +\tfrac12\phi_{0.6}(x+1.2)\right\}
 \]               
 \begin{equation}
 \widehat r(x)=M_\eta^{-1}r(x)e^{\eta x},\quad \eta=0.02,
 \label{eq:continuous-example}
\end{equation}
where $Z_0$ truncates the mixture to $X$, $M_\eta=\int_Xr(x)e^{\eta x}\,\dd x$ and $\phi_\sigma$ is a centered Gaussian density. Both densities are smooth, positive and bimodal. Their score difference is affine,
\begin{equation}
 \widehat f(x)-f(x)=-\eta x+\log M_\eta,
 \label{eq:continuous-score-difference}
\end{equation}
so $\varepsilon_f$ is attained at an endpoint; $\varepsilon_\mu=\tfrac12\int_X|r-\widehat r|\,\dd x$. The index-window suprema, mass ranks and critical-level masses reduce to one-dimensional integration. Deterministic midpoint quadrature with $4\times10^5$ cells gives Table~\ref{tab:continuous-mass-rank}; doubling the grid changes every displayed quantity by less than $10^{-5}$.

\begin{table*}[t]
\vspace*{1.5em}
\centering
\small
\caption{\footnotesize Direct continuous example for Theorem~\ref{thm:persistence-stability}. The only finite $H_0$ bar is reported as enclosed-mass birth/death coordinates.}
\label{tab:continuous-mass-rank}
\begin{tabular}{@{}lrr@{}}
\toprule
Quantity & $r$ & $\widehat r$ \\
\midrule
$\varepsilon_f=\|f-\widehat f\|_\infty$ & \multicolumn{2}{c}{0.08036} \\
$\varepsilon_\mu=d_{\mathrm{TV}}(\mu,\widehat\mu)$ & \multicolumn{2}{c}{0.01210} \\
Index concentration $\omega(\varepsilon_f)$ & 0.43115 & 0.39962 \\
Finite $H_0$ bar $(m_{\rm birth},m_{\rm death})$ & $(0,0.94703)$ & $(0.12494,0.94705)$ \\
Bound $b$ & \multicolumn{2}{c}{0.41172} \\
Observed $\|\rho_f-\widehat\rho_{\widehat f}\|_\infty$ & \multicolumn{2}{c}{0.12973} \\
$H_0$ bottleneck distance & \multicolumn{2}{c}{0.12494} \\
\bottomrule
\end{tabular}
\end{table*}

The bound is deliberately worst-case: total variation contributes only $0.01210$, whereas the selected index window can contain about $0.40$ mass. That concentration penalty is paid uniformly even though the actual rank displacement is $0.12973$ and the only finite bar moves by $0.12494$. The gap therefore quantifies conservatism caused mainly by threshold concentration, rather than failure of the theorem or of the mass coordinate.

\subsection{Dynamical skeleton and transport-conditioned filtrations}

For each label $\lambda$, let $Z_\lambda$ be a second-countable Hausdorff descriptor space with Borel $\sigma$-algebra $\mathcal Z_\lambda$. A \emph{model-induced dynamical descriptor} is a declared Borel map
\begin{equation}
 \Gamma_{t,T}^{\lambda}:(\A_t,\mathcal B(\A_t))
 \longrightarrow (Z_\lambda,\mathcal Z_\lambda)
 \label{eq:descriptor-map}
\end{equation}
constructed from the flow, variational flow, event map, committor, resonance condition or coherent-structure diagnostic. Each label has a declared Borel target $B_\lambda\in\mathcal Z_\lambda$ and defines
\begin{equation}
 K_{t,T}^{\lambda}
 =\left(\Gamma_{t,T}^{\lambda}\right)^{-1}(B_\lambda)\cap\A_t.
 \label{eq:skeleton-element}
\end{equation}
These are the minimal conditions needed for posterior probabilities and intersections to be well defined. Continuum homology is reported only when $C_t^m$ is compact and each conditioned set is compact and tame, for instance triangulable, a compact ENR or definable in a fixed o-minimal structure; Borel first-hit classes failing such conditions still define probabilities, but no continuum persistence module is claimed for them. Under a coordinate change $\psi$ the represented descriptor is $\Gamma_{t,T}^{\lambda}\circ\psi^{-1}$, so $K_{t,T}^{\prime\lambda}=\psi(K_{t,T}^{\lambda})$; a descriptor failing measurability, declared regularity or this covariance rule is not an admissible skeleton component.

The \emph{dynamical skeleton}
\begin{equation}
 \Kfam_{t,T}=
 \left(\{K_{t,T}^{\lambda}\}_{\lambda\in\Lambda},
       \mathcal I,\mathcal L\right)
 \label{eq:skeleton}
\end{equation}
consists of a locally finite labelled family, an incidence relation \(\mathcal I\) and the level-incidence data \(\mathcal L(\lambda,m)=K_{t,T}^{\lambda}\cap C_t^m\). Examples include resonance zones, invariant manifolds, scattering channels, target-plane keyholes, coherent sets and collision surfaces \cite{Wiggins1992,Shadden2005,Haller2015}. Requiring a \emph{declared} descriptor prevents arbitrary clustering in a learned latent space from being called a dynamical skeleton. Both numerical examples use a finite label set.

Three terms are used separately: a \emph{trajectory} is a realization $s\mapsto\Phi_{t,s}(x)$; a \emph{transport class} $K_{t,T}^{\lambda}$ is the set of initial conditions carrying one declared label; a \emph{channel component} is a connected component of $C_t^m\cap K_{t,T}^{\lambda}$. The informal word \emph{pathway} is reserved for a physical mechanism and is never used as a synonym for a label or a component.

For a closed target or hazard set \(H\subseteq\X\), define the hitting time in elapsed time $s$ after the estimation epoch $t$:
\begin{equation}
 \tau_H(x)=\inf\{s\in[0,T]:\Phi_{t,t+s}(x)\in H\},
 \qquad \inf\varnothing=+\infty,
 \label{eq:hitting-time}
\end{equation}
and the event-generating set
\begin{equation}
 E_{t,T}(H)=\{x\in\A_t:\tau_H(x)\le T\}.
 \label{eq:event-set}
\end{equation}
If the flow is continuous on $C_t^m\times[0,T]$ then $E_{t,T}(H)\cap C_t^m$ is compact. With several targets a first-hit label uses the least localized hitting time; simultaneous hits follow a declared deterministic priority rule and their posterior mass is reported separately. Such tie breaking generally produces Borel, possibly half-open classes, so a compact tame representative must be declared before continuum homology is attached. The PCR3BP calculation uses this convention, finds zero ambiguous-event mass at its localization tolerance, and reports finite-grid class incidence only.
The \emph{event-conditioned transport filtration} is
\begin{equation}
 \Pi_{t,T}^{m}(H)=C_t^m\cap E_{t,T}(H),
 \qquad m\in\mathcal M_t,
 \label{eq:pathway-filtration}
\end{equation}
with event probability
\begin{equation}
 p_{t,T}(H)=\mu_t(E_{t,T}(H)).
 \label{eq:event-probability}
\end{equation}
For each degree \(k\), its inclusions define the \emph{transport-conditioned persistence module}
\begin{equation}
 \Pfam_{t,T}^{(k)}(H)=
 \left(
 \{H_k(\Pi_{t,T}^{m}(H);\F)\}_{m\in\mathcal M_t},
 \{j_{m_1,m_2}^{(k)}\}_{m_1<m_2}
 \right).
 \label{eq:pathway-persistence}
\end{equation}
Every labelled skeleton element likewise has the filtration \(\Pi_{t,T}^{m,\lambda}=C_t^m\cap K_{t,T}^{\lambda}\) and, when tame, a module \(\Pfam_{t,T}^{(k)}(\lambda)\). Thus \(\Hfam_t\) records the topology of credible sets and \(\Pfam_{t,T}\) that of transport-conditioned subsets; they coincide only in special cases. For a Markov process the finite-horizon committor
\begin{equation}
 q_{t,T}^{H}(x)=\Pp_x(\tau_H\le T)
 \label{eq:committor}
\end{equation}
replaces the sharp deterministic preimage. For a declared threshold $q_0$ the conditioned sets \(C_t^m\cap\{q_{t,T}^{H}\ge q_0\}\) form a one-parameter mass filtration; varying both $m$ and $q_0$ produces a bifiltration requiring multiparameter summaries, and no one-parameter barcode is then implied.

\subsection{Operational interface and task check}
\label{sec:operational-interface}

For computation, DUG is the report
\begin{strip}
\begin{equation}
 \Gdug_t^T=
 \left(
   \underbrace{\nu_t,\mu_t,\{C_t^m,H_k(C_t^m)\}_m}_{\text{posterior mass geometry}};
   \underbrace{\{P_{t,x}^T\}_x,g_{t,T}}_{\text{future experiment}};
   \underbrace{\{K_{t,T}^{\lambda},H_k(C_t^m\cap K_{t,T}^{\lambda})\}_{m,\lambda}}_{\text{labelled transport}}
 \right).
 \label{eq:dug-operational}
\end{equation}
\end{strip}
The semicolons record provenance, not independence: the same dynamics may generate the future laws and the transport labels, and under a diffeomorphism every measure, set and tensor is transported together, so the represented report is the same physical object. Covariance checks are collected in Appendix~\ref{app:supporting-results}. A reduced report is adequate for a task $\Theta$ only when $\Theta$ factors through the retained entries on the declared model class: event probability needs posterior weights and the event set but no Fisher tensor, whereas finite label-discrimination utility needs label weights and conditional future laws, which a local quadratic tensor cannot supply.

\begin{proposition}[Finite label utility and the Fisher limitation]
\label{prop:mi-minimality}
On the class of positive weighted finite state spaces with a binary label $\Lambda$ and output $Z$, the value $I(\Lambda;Z)$ requires the state weights, label map and conditional future-law family. Retaining those weights and labels together with only the local Fisher tensor is not sufficient.
\end{proposition}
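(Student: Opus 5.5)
The plan has two parts. First, a sufficiency statement: $I(\Lambda;Z)$ is a functional of exactly the state weights, the label map and the conditional laws. Second, a necessity and insufficiency statement, proved by explicit pairs of models.

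For sufficiency, on a finite state space $\{x_i\}$ with positive weights $w_i$ and labels $\lambda(x_i)\in\{0,1\}$, the joint law of $(\Lambda,Z)$ is
\[
\sum_i w_i\,\delta_{\lambda(x_i)}\otimes P_{x_i}.
\]
Hence $I(\Lambda;Z)=H(\Lambda)-H(\Lambda\mid Z)$ is determined by $\pi_\lambda=\sum_{\lambda(x_i)=\lambda}w_i$ and the label-conditional mixtures $Q_\lambda=\pi_\lambda^{-1}\sum_{\lambda(x_i)=\lambda}w_iP_{x_i}$. Concretely,
\[
I=\sum_\lambda \pi_\lambda\, D_{\mathrm{KL}}\bigl(Q_\lambda\,\|\,\textstyle\sum_{\lambda'}\pi_{\lambda'}Q_{\lambda'}\bigr).
\]

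For necessity, I would show that each ingredient can be varied while the others are held fixed, and that $I$ changes. The cleanest witness is two states with $P_{x_1}=\mathcal N(0,1)$ and $P_{x_2}=\mathcal N(a,1)$.
\begin{itemize}[label=--]
\item Weights: keep the labels distinct and the laws fixed, and change $w$ from $(1/2,1/2)$ to $(0.9,0.1)$. This changes $H(\Lambda)$ and therefore $I$.
\item Labels: with the laws fixed, compare distinct labels against a constant label. The constant label gives $I=0$, while distinct labels give $I>0$ once $a\neq0$.
\item Conditional laws: with the weights and labels fixed, changing $a$ changes $I$, since $I$ is strictly increasing in $|a|$ by a data-processing/monotonicity argument.
\end{itemize}

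The main obstacle is making precise what ``local Fisher tensor'' means on a finite state space, where no tangent directions exist. I would follow the paper's convention and let the finite states be sample points $x_i$ of a smooth family $x\mapsto P_x$ on $\R$, so that the retained data are the values $g_{x_i}$ computed by Eq.~\eqref{eq:fisher}. The construction then uses two families that agree in Fisher tensor at every sampled point but differ globally. I would take Gaussian location families $P_x=\mathcal N(\theta(x),1)$, for which $g_x=\theta'(x)^2$. Pick $x_1=0$, $x_2=1$ with distinct labels and equal weights.

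\emph{Family A} is $\theta_A(x)=x$, which has $g\equiv1$ and separation $|\theta_A(1)-\theta_A(0)|=1$.

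\emph{Family B} is any smooth $\theta_B$ with $\theta_B'(0)=\theta_B'(1)=1$ and $\theta_B(1)-\theta_B(0)=5$, for instance $\theta_B(x)=x+4\,(3x^2-2x^3)$. Here $3x^2-2x^3$ has zero derivative at both endpoints and rises from $0$ to $1$.

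Both families have identical weights, labels and Fisher values $g_{x_i}=1$. However $I_B>I_A$, by strict monotonicity of the binary Gaussian mutual information in the mean separation. That monotonicity follows because a larger separation is a less-noisy channel: $\mathcal N(0,1)$ versus $\mathcal N(d,1)$ is a garbling of $\mathcal N(0,1)$ versus $\mathcal N(d',1)$ for $d<d'$, via scaling plus added noise. Strictness can be confirmed by a direct calculation.

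If instead the Fisher tensor is read as a single global quadratic form, the same example still works with $\theta_B$ affine of slope 1 plus a far-away jump region. Either way, the local data up to second order in Eq.~\eqref{eq:local-kl} do not determine finite-separation divergences, which is precisely the remark made after Eq.~\eqref{eq:local-kl}.
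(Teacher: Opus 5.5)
Your argument is correct and has the same skeleton as the paper's. Sufficiency comes from the joint law $p_{\lambda z}=\sum_{i:\lambda(i)=\lambda}w_iP_i(z)$. Then there is one witness pair per ingredient, and finally a pair with equal weights, labels and Fisher data but different information.

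Where you differ is in the witnesses, most importantly the Fisher one. The paper uses four states with binary outputs. It compares the state-independent Bernoulli$(1/2)$ family with the label-aligned family $P_i=\delta_{\lambda(i)}$, both locally constant on separated neighbourhoods. The Fisher tensor then vanishes identically near every state for both families, and the information is exactly $0$ versus $1$ bit. Nothing has to be estimated, and the tensors agree on whole neighbourhoods rather than only at the sampled points. Four states also let the label witness compare two non-degenerate labelings, aligned $(0,0,1,1)$ versus crossed $(0,1,0,1)$; your two-state version has to fall back on a constant label.

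Your Gaussian location pair, $\theta_A(x)=x$ versus $\theta_B(x)=x+4(3x^2-2x^3)$, buys a stronger-looking witness. It has identical \emph{nonzero} Fisher values at both states and smooth, fully supported channels on a connected state space. This is closer in spirit to Eq.~\eqref{eq:gaussian-observation} and to the remark after Eq.~\eqref{eq:local-kl}. Its cost is the monotonicity of binary Gaussian information in the separation. Your garbling $Z_d=(d/d')Z_{d'}+\sqrt{1-(d/d')^2}\,\xi$ gives the weak inequality. For the strict one, crude bounds suffice: with Bayes error $\Phi(-d/2)$ ($\Phi$ the standard normal distribution function), the inequality $H(\Lambda\mid Z)\ge 2P_e$ gives $I\le 0.39$ bit at $d=1$, and Fano gives $I\ge 0.94$ bit at $d=5$.

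Two small repairs are needed. First, in your weights bullet, ``changes $H(\Lambda)$ and therefore $I$'' does not follow, because $H(\Lambda\mid Z)$ changes as well; at $a=0$ the information is zero for every weight. For $a\neq0$ the claim is true. The differential entropy of $Z$ given $\Lambda$ is affine in the label weight, while that of $Z$ is strictly concave in it. The reflection $z\mapsto a-z$ makes $I$ symmetric about weight $1/2$, so $I$ at $(0.9,0.1)$ is strictly smaller than at $(1/2,1/2)$. The simplest fix, though, is the paper's deterministic $Z=\Lambda$, for which $I=H(\Lambda)$ exactly.

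Second, drop the closing aside about reading the tensor as a ``single global quadratic form''. Suppose it means the tensor fields agree on a connected interval containing both states. Then within unit-variance location families, $\theta_B'^2=\theta_A'^2=1$ there forces $\theta_B'\equiv\pm1$ by continuity, and hence equal separations. Any ``jump region'' also alters $g$ where it sits. If you want agreement beyond the two sampled points, replace $3x^2-2x^3$ by a smooth step that is flat near $0$ and near $1$, or use the paper's separated neighbourhoods. The pointwise reading in your main argument is already the one the proposition needs.
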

\begin{proof}
The factorization and three witness pairs are given in Appendix~\ref{app:task-witnesses}. In particular, locally constant state-independent and label-aligned future laws have the same zero local Fisher tensor but mutual information zero and one bit, respectively.
\end{proof}

\begin{corollary}[A local Fisher tensor is not task-sufficient for finite label utility]
\label{cor:fisher-not-sufficient}
Two experiment designs can have the same retained posterior weights, labels and local Fisher tensor but different finite-separation label information. Therefore optimization of $g_{t,T}$ need not optimize $I(\Lambda;Z)$.
\end{corollary}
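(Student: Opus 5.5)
The plan is to prove Corollary~\ref{cor:fisher-not-sufficient} by exhibiting one explicit pair of experiment designs, taken directly from the witnesses behind Proposition~\ref{prop:mi-minimality}, rather than arguing abstractly.

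\emph{Step 1: the common retained data.} Fix a two-point state space $\{x_1,x_2\}$ with the counting reference measure. Give both states posterior weight $\tfrac12$, and take the label map $\Lambda(x_i)=i$, so $\Lambda$ is binary with uniform law. Regard each $x_i$ as carrying a small open chart neighbourhood on which the future law is locally constant. Then the retained posterior block and the retained transport block are identical for both designs by construction.

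\emph{Step 2: the two designs.} Take binary output $Z\in\{0,1\}$.
\begin{itemize}
\item Design A: $P_x=\mathrm{Bernoulli}(\tfrac12)$ for every $x$, so the law does not depend on the state.
\item Design B: $P_x=\delta_0$ on the neighbourhood of $x_1$ and $P_x=\delta_1$ on that of $x_2$, so the law is aligned with the label.
\end{itemize}
Both families are locally constant in $x$, so they are differentiable in quadratic mean with zero score. By Eq.~\eqref{eq:fisher} this gives $g_{t,T,x}\equiv0$ in both cases. Proposition~\ref{prop:pseudometric} confirms the interpretation: every direction lies in the null space, because neither design responds to an infinitesimal perturbation of the state.

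\emph{Step 3: the mutual information.} Compute $I(\Lambda;Z)$ from the mixture $P(Z)=\sum_i \mu_i P_{x_i}$.
\begin{itemize}
\item In Design A, $Z$ is independent of $\Lambda$, so $I(\Lambda;Z)=0$.
\item In Design B, $Z$ determines $\Lambda$, so $I(\Lambda;Z)=H(\Lambda)=\log 2$, which is one bit.
\end{itemize}
The two designs therefore agree on the weights, the labels and $g_{t,T}$, yet differ in label information. This proves the first sentence of the corollary.

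\emph{Step 4: the optimization claim.} Any criterion that is a function of $g_{t,T}$ alone, such as a trace, determinant or eigenvalue of the Gramian in Eq.~\eqref{eq:gramian}, takes equal values on A and B. Such a criterion can therefore select A as an optimizer while B strictly improves $I(\Lambda;Z)$, so optimizing $g_{t,T}$ need not optimize $I(\Lambda;Z)$.

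The main point to check is Step 2. A locally constant law on disconnected neighbourhoods must be a legitimate experiment with zero Fisher tensor, even though its finite-separation behaviour differs. This is resolved by noting that Eq.~\eqref{eq:local-kl} is purely local, so the distinction between A and B lives entirely at finite separation, which is exactly what Eq.~\eqref{eq:local-kl} says the tensor cannot see.

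If the connectedness of $\X$ must be respected, replace the two points by two disjoint balls joined through a region of negligible posterior weight. In Design B, let the law interpolate smoothly inside that region. The weights and labels change by an arbitrarily small amount, while $I(\Lambda;Z)$ stays close to one bit and the Fisher tensor remains zero on the support of the retained weights.
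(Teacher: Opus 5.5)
Your argument is correct and is essentially the paper's own proof: Appendix~\ref{app:task-witnesses} uses the same witness pair, namely a state-independent Bernoulli$(1/2)$ law against label-aligned Dirac laws, both locally constant on separated neighbourhoods so that the Fisher tensor vanishes, with $I(\Lambda;Z)$ equal to $0$ versus one bit (the paper uses four states, two per label, instead of your two, which changes nothing). Your closing connectedness remark is not needed, and as written it holds only if the corridor carries zero weight: otherwise design~B's interpolating law has a nonzero Fisher tensor on positive-weight states where design~A's is zero.
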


\begin{proposition}[A non-injective observable need not determine transport]
\label{thm:skeleton-obstruction}
There are smooth deterministic flows, in the sense of Eq.~\eqref{eq:flow}, with the same posterior and the same declared non-injective future observable, for which one common credible set has different numbers of target-preimage components. A target label or an equivalent event-complete observation is therefore necessary for universal channel reconstruction.
\end{proposition}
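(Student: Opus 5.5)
The plan is an explicit witness construction. Take $\X=\R^2$ (or a compact annulus/disc) with Lebesgue reference measure and a Gaussian posterior $\mu$ centered at the origin. The credible sets $C^m$ are then concentric discs; fix one regular mass $m$ and write $C^m=\{|x|\le\rho\}$. Let the declared future observable be a non-injective smooth map of the flowed state, for instance $h(y)=|y|^2$, observed through the Gaussian experiment of Eq.~\eqref{eq:gaussian-observation}. The idea is to build two flows $\Phi^{(1)},\Phi^{(2)}$ that commute with rotations on the relevant region, so that $|\Phi^{(i)}_{t,s}(x)|$ depends only on $|x|$ and is the same function for both. Then the future-law families $x\mapsto P_{t,x}^T$ coincide, and by Eq.~\eqref{eq:gramian} so do the Fisher tensors, while the angular motions differ.

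Concretely, take the vector fields $V_i=a(|x|)\,x\partial_r$-type radial part plus an angular part $\omega_i(|x|)\partial_\theta$. Both are smooth, complete and generate $C^2$ flows with $\Phi_{s,r}\circ\Phi_{t,s}=\Phi_{t,r}$. Choose the same radial part, for example zero, so that $|\Phi^{(i)}(x)|=|x|$ and the observations are identical. For the angular part, choose $\omega_1\equiv0$ and $\omega_2(r)=2\pi k\,\chi(r)/T$, with $\chi$ a smooth monotone ramp from $0$ to $1$ across $[\rho/3,2\rho/3]$. Fix a closed target $H$, say a thin sector $\{\theta\in[-\delta,\delta],\ |x|\le\rho\}$, and use the horizon-$T$ terminal preimage $(\Phi^{(i)}_{t,t+T})^{-1}(H)\cap C^m$.

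If one prefers the hitting-time set $E_{t,T}(H)$ of Eq.~\eqref{eq:event-set}, the same construction works, but $H$ should be chosen so that rotation does not sweep through it prematurely. The simplest choice is a terminal-time event, which is a legitimate declared descriptor under Eq.~\eqref{eq:descriptor-map}. The counting then runs as follows. For $\Phi^{(1)}$ the preimage is the sector itself, which has one component. For $\Phi^{(2)}$, the inner disc $r<\rho/3$ is unrotated and contributes one piece, the outer annulus $r>2\rho/3$ is rotated by $2\pi k$ and contributes another, and the ramp region turns its piece into a spiral sheared $k$ times. The task is to verify the number of components. Choosing $k$ non-integer on the outer annulus (for example an outer rotation of $\pi$) makes the counting cleaner: the preimage is then the inner sector union the antipodal outer sector, joined by a spiral strip through the ramp. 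To get genuinely different component counts, I would cut the ramp instead, restricting $H$ to radii outside $[\rho/3,2\rho/3]$, that is $H=\{\theta\in[-\delta,\delta]\}\cap\{|x|\notin(\rho/3,2\rho/3)\}$, which is closed. Then $\Phi^{(1)}$ yields two components with aligned angles, and $\Phi^{(2)}$ with outer rotation $\pi$ also yields two components. So the example must instead be designed so that the count truly changes: take $H=\{|\theta|\le\delta\}\cap\{|x|\le\rho\}$ (connected) and let $\Phi^{(2)}$ rotate the inner and outer regions differently, with a steep ramp. The preimage under $\Phi^{(2)}$ is still connected through the spiral, so connectivity is preserved by homeomorphism. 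Hence component counts of $\Phi^{-1}(H)$ always equal those of $H\cap\Phi(C^m)$, and I need $\Phi(C^m)$ to differ in how it meets $H$.

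This is the main obstacle, and it forces the design: choose $H$ disconnected-shaped relative to angle, say two radial sectors at angles $0$ and $\pi$ lying only in the outer annulus, joined by an arc in the inner region at angle $0$ only. Under $\Phi^{(1)}$ the preimage has two components (the sector at $0$ joined with the inner arc, and the sector at $\pi$). Under $\Phi^{(2)}$, with the outer annulus rotated by $\pi$ and the inner disc fixed, the outer sectors swap; the intersection pattern changes through the ramp, and by choosing the ramp rotation so that the spiral image misses $H$, the count becomes three. I would verify this with explicit angular bookkeeping, since everything is rotationally parametrized.

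Finally, I would check that the posterior, $\nu$, $C^m$ and the experiment are identical for the two flows, and conclude the necessity claim. Any reconstruction map taking (posterior, observable family) to channel counts would have to assign one value to both witnesses, so such a map must additionally receive the target label or an observation from which the event is measurable.
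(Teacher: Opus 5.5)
\textbf{Gap: the witness cannot produce different counts.} The obstruction is the invariant you wrote down yourself. Both flows have zero radial part, and the credible set $C^m=\{|x|\le\rho\}$ is a disc centred at the rotation centre. So each time-$T$ map $\Phi^{(i)}$ maps every circle $\{|x|=r\}$ onto itself, and $\Phi^{(i)}(C^m)=C^m$ for $i=1,2$. Consequently, for every closed target $H$,
\[
 (\Phi^{(i)})^{-1}(H)\cap C^m=(\Phi^{(i)})^{-1}\bigl(H\cap C^m\bigr)\cong H\cap C^m ,
\]
so both witnesses always have exactly the component count of $H\cap C^m$. No choice of target escapes this. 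For your final design (two outer sectors joined to an inner arc), both flows give the count of $H\cap C^m$, and the asserted ``two versus three'' is false. The deferred ``explicit angular bookkeeping'' therefore cannot succeed. You correctly noted that you need $\Phi(C^m)$ to meet $H$ differently, but you then chose flows with $\Phi^{(1)}(C^m)=\Phi^{(2)}(C^m)$.

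\textbf{Missing idea and repair.} The two flows must agree on the observed non-injective coordinate while carrying the common credible set to \emph{differently shaped} images. The paper does this with a shear in the unobserved direction. The fields $W^A=(1,0)$ and $W^B=(1,3[1-(x-s)^2])$ have time-one maps $(x+1,y)$ and $(x+1,y+3(1-x^2))$, and both report $Z=x+1+\xi$. On $D_2=\{x^2+y^2\le4\}$ the preimage of $\{y=0\}$ is one segment for A but two arcs for B. Indeed, $y=-3(1-x^2)$ lies in $D_2$ iff $9u^2-17u+5\le0$ with $u=x^2$, which excludes a neighbourhood of $x=0$.

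Your rotational setting can be repaired in the same spirit by breaking the symmetry between posterior and twist centre. Centre the Gaussian at $(R,0)$ with $R>\rho$ and keep the radius-preserving flows and the observable $|y|^2$. Use the twist $\theta\mapsto\theta+A\beta(r)$, where $\beta$ is a bump equal to $1$ near $r=R$ and vanishing near $r=R\pm\rho$, and $\arcsin(\rho/R)<A<\pi$. Then the preimage of the ray $\{\theta=0\}$ meets the disc in one segment without the twist and in at least two arcs with it.

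Alternatively, the hitting set $E_{t,T}(H)$ is a sweep rather than a homeomorphic preimage, so it can change count even in your original centred construction. For example, take two outer sectors at angles $0$ and $\pi$ with an outer rotation of total angle $\pi$: the hitting set has two components without rotation and one with it. Your closing necessity argument is fine once a valid witness is in place.
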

\begin{proof}
Appendix~\ref{app:task-witnesses} gives an explicit pair of smooth nonautonomous vector fields, with their flows written in closed form, whose reported terminal $x$ coordinate agrees while a target-line preimage has one versus two components.
\end{proof}

\subsection{Update, forecast and topological transition}
\label{sec:topology-change}

Forecast and update play different geometric roles, and only the first transports the construction. A deterministic forecast pushes $\mu_t$ forward by the flow, so the credible filtration and its homology are carried along; an observational update multiplies the forecast density by a likelihood and renormalizes, so the filtration must be recomputed. There is no meaningful operation that ``pushes forward the entire report'' through an assimilation.

At fixed enclosed mass the topology of $C_t^m$ is rigid away from identifiable events. Theorem~\ref{thm:isotopy} in Appendix~\ref{app:isotopy} records the precise statement: for a $C^2$ family with a nonvanishing gradient on the moving level set and transversal intersection with $\partial\A$, the credible sets are mutually ambiently isotopic and their Betti numbers are constant. Consequently, within a smooth Bayesian family, topology at fixed mass can change only if a threshold crosses a critical point of the reference-relative density, transversality with an admissibility boundary fails, the support or admissible set changes, or the model ceases to be smooth. Projection, marginalization and finite numerical resolution can additionally create \emph{apparent} transitions, because they replace the underlying set by a different object. Large stretching and folding alone are never sufficient. The nonsmooth, cut-defined admissible regions used in orbit determination do not satisfy the hypotheses automatically: the theorem indicates which mechanisms to test, and does not certify grid intersections.

\section{Why covariance does not determine global transport}
\label{sec:gaussian-limit}

\subsection{A covariance obstruction for DUG topology}

The local Gaussian limit of Appendix~\ref{app:supporting-results} explains when covariance suffices; the following shows why no covariance-based extension recovers the global topological component in general.

\begin{theorem}[First two moments do not determine credible-set topology]
\label{thm:covariance-obstruction}
For every dimension $n\ge1$, there exist two probability measures on $\R^n$ with smooth, everywhere positive densities, identical mean and identical positive-definite covariance, and a regular mass $m\in(0,1)$ for which their highest-density credible sets have different zeroth Betti numbers. Consequently, credible-set or event-conditioned channel multiplicity is not a function of the mean and covariance.
\end{theorem}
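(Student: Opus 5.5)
We construct the pair explicitly: a Gaussian on one side and a symmetric bimodal Gaussian mixture on the other, matched in mean and covariance. Take the reference measure to be Lebesgue measure on $\R^n$. In that case the credible set is the highest-density superlevel set $\{r\ge c_t(m)\}$.

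\emph{Step 1 (one dimension).} Fix $a>0$ and $\sigma>0$ with $a>\sigma$. Define
\[
r_1(x)=\tfrac12\phi_\sigma(x-a)+\tfrac12\phi_\sigma(x+a).
\]
It has mean $0$ and variance $s^2=\sigma^2+a^2$. Let $r_0=\phi_s$ be the Gaussian with the same two moments. Both densities are smooth and everywhere positive. Every superlevel set of $r_0$ is an interval, so $\beta_0=1$ at every mass. For $r_1$, the condition $a>\sigma$ makes $x=0$ a strict local minimum, since $r_1''(0)>0$ exactly when $a>\sigma$. The density then has exactly two maxima at $\pm x^*$ and one minimum at $0$.

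Pick a threshold $c$ with $r_1(0)<c<r_1(x^*)$. The set $\{r_1\ge c\}$ is then a disjoint union of two compact intervals, symmetric about $0$, so $\beta_0=2$. The part that needs care is showing that there are no further critical points, so that the count is exactly two and the level is regular. I would handle this with the standard fact that an equal-weight two-component mixture is bimodal if and only if $a>\sigma$. Alternatively, write $r_1\propto e^{-(x^2+a^2)/2\sigma^2}\cosh(ax/\sigma^2)$ and analyse $\log r_1'=0$, which reduces to $\tanh(ax/\sigma^2)=x/a$. This equation has exactly three roots when $a^2/\sigma^2>1$, by concavity of $\tanh$ on $(0,\infty)$.

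\emph{Step 2 (regular mass and a common $m$).} Avoid the finitely many critical values of $r_1$. Then every such $c$ is a regular value, and $\mu_1\{r_1=c\}=0$ because the level set is finite. Set $m=G_1(c)\in(0,1)$; this is a regular mass in the sense of Eq.~\eqref{eq:enclosed-mass-map}. For the Gaussian every mass in $(0,1)$ is regular, and its credible set at that same $m$ is a single interval. So at one common regular mass the zeroth Betti numbers are $1$ and $2$.

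\emph{Step 3 (all $n$).} Take products with a standard Gaussian in the remaining $n-1$ coordinates, for both densities. Means and covariances still agree, and the covariance is $\mathrm{diag}(s^2,1,\dots,1)$, which is positive definite. The densities factor as $r_i(x_1)\,\phi(x')$. The superlevel set $\{r_1(x_1)\phi(x')\ge c\}$ is a union, over $x'$, of slices $\{r_1\ge c/\phi(x')\}$. Its components therefore correspond to those of the sublevel structure of $-\log r_1(x_1)+|x'|^2/2$. This function has exactly two minima and one saddle, so the superlevel set is either two disjoint convex-like blobs or one connected set. The slices are unions of at most two intervals, nested in the threshold, and for $|x'|$ large they shrink to two intervals about $\pm x^*$ before vanishing. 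Hence for thresholds strictly between the saddle value $r_1(0)\phi(0)$ and the peak value $r_1(x^*)\phi(0)$, the set has exactly two components. Regularity of the level and the no-plateau condition hold by Morse theory, with finitely many critical points.

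For the Gaussian product, superlevel sets are ellipsoids, hence connected. Choose the common regular mass $m$ as in Step 2.

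\emph{Step 4 (consequence).} Since the map from $(\mu,m)$ to $\beta_0(C^m)$ takes different values on two laws with the same mean and covariance, it cannot factor through $(\text{mean},\text{cov})$. For channels, take the transport class $K=\X$, so that the event-conditioned sets equal the credible sets; the same witness then applies.

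I expect the main obstacle to be the exact critical-point count in Step 1, which certifies both $\beta_0=2$ and the regularity of the threshold. The $\tanh$ fixed-point argument settles it.
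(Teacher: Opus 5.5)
Your proof is correct and follows the paper's own argument. It uses the same witness pair (a Gaussian against a moment-matched equal-weight two-component mixture), the same bimodality criterion $a>\sigma$ obtained from the fixed-point equation $x=a\tanh(ax/\sigma^2)$ and the concavity of $\tanh$, and the same product extension to $\R^n$.

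The differences are cosmetic. The paper normalizes the common covariance to $I_n$ by taking $\sigma^2=1-a^2$. For $n>1$ it uses radial contraction in the transverse coordinates to retract each superlevel set onto its slice at $x'=0$, which is tidier than your slice/Morse description. In your threshold range every nonempty slice is already two intervals, so the remark about slices shrinking ``for $|x'|$ large'' is unnecessary.
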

\begin{proof}
The witnesses are $\mu_1=\mathcal N(0,I_n)$ and the equal-weight mixture
$\mu_2=\tfrac12\mathcal N(ae_1,\Sigma_a)+\tfrac12\mathcal N(-ae_1,\Sigma_a)$ with
$\Sigma_a=\operatorname{diag}(1-a^2,1,\ldots,1)$ and $a>2^{-1/2}$, which share
mean $0$ and covariance $I_n$ while having one and two superlevel components
respectively. Details are given in Appendix~\ref{app:covariance-witness}.
\end{proof}

\begin{corollary}[Covariance cannot certify channel uniqueness]
\label{cor:covariance-pathways}
Even exact knowledge of posterior mean and covariance cannot certify that a credible event preimage contains one channel component rather than several. Such certification requires the credible filtration, the event map and their persistent $H_0$ incidence.
\end{corollary}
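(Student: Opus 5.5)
The statement to prove is Corollary~\ref{cor:covariance-pathways}. I would derive it directly from Theorem~\ref{thm:covariance-obstruction} by taking an event map that is trivial on the relevant region, so that the event-conditioned set coincides with the credible set itself.

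Fix $n\ge1$ and take the witnesses $\mu_1=\mathcal N(0,I_n)$ and $\mu_2=\tfrac12\mathcal N(ae_1,\Sigma_a)+\tfrac12\mathcal N(-ae_1,\Sigma_a)$ with $a>2^{-1/2}$. Both have mean $0$ and covariance $I_n$. Theorem~\ref{thm:covariance-obstruction} supplies a regular mass $m$ at which $\beta_0(C_1^m)=1$ and $\beta_0(C_2^m)=2$.

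Next choose a smooth flow $\Phi_{t,s}$ and a closed target $H$ such that $E_{t,T}(H)\supseteq C_1^m\cup C_2^m$. Both credible sets are compact, so the simplest choice is the identity flow together with a closed ball $H$ containing both. Then $\tau_H=0$ on that ball, and $\Pi_{t,T}^m(H)=C_i^m\cap E_{t,T}(H)=C_i^m$. The event-conditioned channel multiplicities are therefore $1$ and $2$. If one wants a nontrivial event instead, a half-space target $H=\{x_1\ge -R\}$ under a translation flow works equally well for $R$ large. In either case the posterior mass, the event map and the future observable can be declared identically for the two models; only the posterior differs.

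Since the two models share mean and covariance but give different persistent $H_0$ of $\Pi^m$, no map from (mean, covariance) to channel multiplicity can exist. In particular, covariance cannot certify that there is a single channel component.

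For the positive clause, note that $\beta_0(\Pi_{t,T}^m)$ is by definition computed from $C_t^m$ and $E_{t,T}(H)$. Their incidence module $\Pfam^{(0)}_{t,T}(H)$ across $m$ therefore suffices. Necessity of the event map follows from Proposition~\ref{thm:skeleton-obstruction}, which gives two models with the same posterior but different preimage component counts. Necessity of the credible filtration follows from the witness above.

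The only step needing care is regularity: the level $m$ must be regular for both densities, meaning a noncritical threshold with no plateau. I would appeal to the appendix details behind Theorem~\ref{thm:covariance-obstruction}. If regularity were not given for both simultaneously, I would perturb $m$ slightly, using the fact that Betti numbers are locally constant in $m$ away from critical levels (Theorem~\ref{thm:isotopy}).
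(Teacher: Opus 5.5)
Your proposal is correct and follows the paper's route. The paper gives no separate proof: the corollary is read off from the last sentence of Theorem~\ref{thm:covariance-obstruction}, using the same equal-moment witnesses and an event set containing the credible sets, so that conditioned and unconditioned multiplicities coincide; your identity-flow/ball construction supplies exactly that step. Your regularity fallback is unnecessary, because every mass in $(0,1)$ is regular for $\mathcal N(0,I_n)$ and the appendix supplies regularity for the mixture; Theorem~\ref{thm:isotopy}, which varies the density at fixed mass, would not be the right tool for moving $m$ anyway.
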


This is not a criticism of covariance in the linear-Gaussian regime but an information-theoretic obstruction: no manipulation of two moments reconstructs an invariant they do not determine.

\section{Representation diagnostics}
\label{sec:representation-diagnostics}

DUG is coordinate-covariant, but computations use particles, meshes, mixtures, flows or graph complexes, so numerical equivalence must be defined by controlled errors in the components and in the outputs.

\subsection{Event-probability stability}

For probability measures $\mu,\widehat\mu$ and measurable event sets $E,\widehat E$, inserting and subtracting $\widehat\mu(E)$ gives the standard error decomposition
\[
 |\mu(E)-\widehat\mu(\widehat E)|
 \le d_{\TV}(\mu,\widehat\mu)
 +\widehat\mu(E\triangle\widehat E)
 \]
 \begin{equation}
 d_{\TV}(\mu,\widehat\mu)=\sup_B|\mu(B)-\widehat\mu(B)|.
 \label{eq:event-error-bound}
\end{equation}
The first term is posterior approximation error, the second is geometric error in resolving the event-generating set; reporting only Monte Carlo uncertainty is therefore misleading when a keyhole or collision surface is under-resolved.

\subsection{A computable weighted-grid diagnostic}
Theorem~\ref{thm:persistence-stability} separates log-density error, measure error and mass concentrated near an index threshold; different domains or reference measures first require a declared identification and pullback. The following calculation borrows that decomposition for represented grids. Suppose two positive approximations are evaluated on common validation cells
with reference masses $v_i$ and normalized posterior weights $w_i$ and
$\widehat w_i$. Set
\begin{align}
 f_i&=-\log(w_i/v_i),
 &\widehat f_i&=-\log(\widehat w_i/v_i),\\
 \varepsilon_{f,h}&=\max_i|f_i-\widehat f_i|+\eta_{\mathrm{int}},
 & \\
 \varepsilon_{\mu,h}&=\frac12\sum_i|w_i-\widehat w_i|+\eta_{\mathrm{TV}},
 \label{eq:empirical-eps-delta}
\end{align}
where the envelopes include interpolation and unresolved-cell errors. Zero
weights cannot be hidden by a log floor: excluded cells must be added to
$\eta_{\mathrm{TV}}$. For either weighted index array compute the exact
sliding-window statistic
\begin{equation}
 \widehat\omega_f(a)
 =\max_{u\in\R}\sum_i w_i\,
 \mathbf 1\{u\le f_i\le u+2a\}.
 \label{eq:empirical-omega}
\end{equation}
A verified index-CDF envelope $\eta_{\mathrm{cdf}}$ gives
$\omega_f(a)\le\widehat\omega_f(a)+2\eta_{\mathrm{cdf}}$.
Use $\widehat\eta_{\mathrm{cdf}}$ analogously for the second approximation.
For $N$ independent draws, the Dvoretzky--Kiefer--Wolfowitz inequality gives
\(
\eta_{\mathrm{cdf}}=\sqrt{\log(2/\alpha)/(2N)}
\)
with confidence $1-\alpha$; self-normalized importance weights require a
separate concentration bound.

The resulting inter-resolution diagnostic is
\begin{equation}
 b_h=\varepsilon_{\mu,h}+\min\left\{
 \widehat\omega_f(\varepsilon_{f,h})+2\eta_{\mathrm{cdf}},
 \widehat\omega_{\widehat f}(\varepsilon_{f,h})
 +2\widehat\eta_{\mathrm{cdf}}
 \right\}.
 \label{eq:empirical-bottleneck}
\end{equation}
Under the stated envelopes, a feature with mass persistence greater than
$2b_h$ cannot be matched to the diagonal. Without verified interpolation,
quadrature and CDF envelopes, the calculation is an inter-resolution comparison
rather than a continuum certificate.

\begin{table*}[t]
\centering
\small
\caption{\footnotesize Weighted-grid diagnostics motivated by Theorem~\ref{thm:persistence-stability}. The short-arc row compares a bilinear interpolation of the $\grid{101}$ log profile density with direct evaluation on a common $\grid{201}$ validation grid, conditional on the union of their tie-complete 99.9\% regions. No continuum envelopes are asserted.}
\label{tab:mass-rank-check}
\begin{tabular}{@{}lrrrrr@{}}
\toprule
Case & $\varepsilon_{f,h}$ & $\varepsilon_{\mu,h}$ & $\min\omega$ & $b_h$ & Observed $\|\Delta\rho\|_\infty$ \\
\midrule
Four-cell exact example & 0.10536 & 0.03000 & 0.40000 & 0.43000 & 0.02000 \\
Short arc, $\grid{101}\to\grid{201}$ & 0.00488 & 0.00059 & 0.07336 & 0.07395 & 0.01963 \\
\bottomrule
\end{tabular}
\end{table*}

Both gaps are explained by the concentration term: in the four-cell case the bound $0.430$ is dominated by a $0.400$ index window while the realized rank change is $0.020$, and in the short-arc comparison the window contributes $0.07336$ against an observed rank error of $0.01963$. So $b_h$ is a one-sided safety budget, not an error estimator. By Corollary~\ref{cor:atomic-mass-rank} the second row is rigorous for the two declared weighted-grid filtrations on their common validation grid; it still lacks the envelopes needed for a continuum-refinement certificate.

\subsection{Exact relationship between the stability results and the benchmarks}
\label{sec:theorem-scope}

The analytical and numerical parts of this paper meet in a limited and specific
way, and it is worth stating exactly where. Table~\ref{tab:theorem-scope} lists
every computed quantity reported below together with the result, if any, that
controls it.

\begin{table*}[t]
\centering
\scriptsize
\caption{\footnotesize Which result controls which computed quantity. ``Empirical'' means that the quantity is reported as a reproducible computation with declared conventions and refinement studies, not as a certified consequence of a theorem in this paper.}
\label{tab:theorem-scope}
\begin{tabularx}{\textwidth}{@{}p{0.30\textwidth}p{0.24\textwidth}X@{}}
\toprule
Computed quantity & Controlled by & Status \\
\midrule
Continuous tilt example, Table~\ref{tab:continuous-mass-rank} & Theorem~\ref{thm:persistence-stability} & Certified: all hypotheses verified directly, up to the declared quadrature error $3\times10^{-5}$ \\
Short-arc $b_h$, Table~\ref{tab:mass-rank-check} & Corollary~\ref{cor:atomic-mass-rank} & Certified for the two declared weighted-grid filtrations on their common validation grid; not a continuum statement \\
Convergence of $b_h$ as $h\to0$ & Corollary~\ref{cor:mass-rank-consistency} & Conditional: requires verified interpolation, quadrature and index-CDF envelopes uniform in $h$, which are not supplied \\
Short-arc barcode, Section~\ref{sec:barcode} & --- & Empirical. It is an exact degree-zero barcode of the declared finite graph filtration, and Corollary~\ref{cor:atomic-mass-rank} bounds its bottleneck displacement between two grids by $b_h$; no continuum barcode is claimed \\
Short-arc class masses and 100-seed ensemble & --- & Empirical, conditional on the declared model, gate, adjacency and noise law \\
Label information and Fisher eigenvalue, Section~\ref{sec:design} & Corollary~\ref{cor:fisher-not-sufficient} & The qualitative separation is an instance of the corollary; the numerical values are empirical with reported quadrature and step-size audits \\
PCR3BP class masses and intervals & --- & Empirical; $U_h$ and $D_h$ are measured, not bounded a priori \\
$\beta_0(C^m)=1$ for the PCR3BP posterior & --- & Empirical finite-grid statement \\
\bottomrule
\end{tabularx}
\end{table*}

Two consequences deserve emphasis. First, the continuum theorem is exercised in
full only by the one-dimensional example of Eq.~\eqref{eq:continuous-example};
this is deliberate, since that is the only configuration in which every
hypothesis can be checked analytically. Second, no statement below promotes a
finite-grid count to a continuum Betti number. Where the text says that a class
``is resolved'' at a given mass, it means resolved as a component of the
declared finite graph filtration, with the refinement, adjacency and noise
studies reported alongside.

\section{Relation to persistence and transport methods}
\label{sec:classical}
\subsection{Persistence, coherent sets, committors and set-oriented numerics}
\label{sec:tda-positioning}

Persistent homology is used twice, on different filtrations: $\beta_k(C_t^m)$ describes posterior topology and $\beta_k(C_t^m\cap K_{t,T}^\lambda)$ a transport-conditioned subset, and their inclusion maps across $m$ need not agree. When the index is Morse and admissibility boundaries are regular, $\Cfam_t$ is a superlevel-set Morse filtration reparameterized by posterior mass; Reeb graphs, merge trees and Mapper approximate it subject to their usual lens, cover and scale choices \cite{Biasotti2008,Singh2007,CarriereOudot2018,MischaikowNanda2013}.

Coherent-set, set-oriented and transition-path methods solve different problems: they produce almost-invariant or invariant sets from transfer operators and box coverings \cite{DellnitzJunge2002,FroylandPadberg2009,Froyland2013,DellnitzKlusZiessler2017}, or committors, reactive densities and currents between specified sets \cite{EVandenEijnden2006,EVE2010}. Any of these can supply $K_{t,T}^\lambda$ or, through Eq.~\eqref{eq:committor}, the descriptor itself. DUG adds neither their objectives nor their solvers. It adds the declared posterior and reference measure, and records which such mechanism intersects a credible set, at what posterior mass, and whether the declared future experiment distinguishes the results. Conversely, a posterior cluster is not a coherent set merely because it persists across density levels.

Two coupled statements follow. ``Both $L_1$ and $L_2$ labels occur throughout a declared mass range'' is implied neither by the posterior filtration, which carries no exit label, nor by a transport partition, which carries no inference-time mass: it is a statement about $\Pfam_{t,T}$. And ``a future observation discriminates posterior-weighted transport labels'' requires $\Lambda$, its weights and the conditional densities $p(z\mid\Lambda)$; neither a committor nor an unlabelled future-law family defines $I(\Lambda;Z)$.

\section{Minimum computational report}
\label{sec:computation}

A numerical DUG is a report assembled from existing tools, not a prescribed pipeline. At minimum it should declare $\A_t$, $\nu_t$, the calibration status of $\widehat\mu_t$ and the mass grid; declare the future-law family and whether it contains every event functional used later; declare each descriptor, its label and its numerical incidence rule; compute only the views the scientific output uses; and refine posterior, index, topology and label boundaries separately, reporting Eqs.~\eqref{eq:event-error-bound}, \eqref{eq:persistence-stability} and \eqref{eq:empirical-bottleneck} where their hypotheses hold.

\paragraph{Discrete topological object used below.}
On each rectangular grid a selected cell centre is a vertex and edges join vertices at Chebyshev distance one: the eight-neighbour ``king'' graph, whose flag complex gives the same $H_0(-;\F_2)$. A four-neighbour graph is evaluated as an adjacency sensitivity test. Requested mass levels retain the complete threshold tie, producing nested superlevel graphs $G_m$ and a realized mass that may slightly exceed the request.

Because the gate is fixed and the credible filtration is nested, the conditioned graphs $G_m$ are nested too, so $\{H_0(G_m;\F_2)\}_m$ with its inclusion-induced maps is a bona fide degree-zero persistence module admitting an exact barcode in the enclosed-mass coordinate. That barcode, computed in Section~\ref{sec:barcode}, is the primitive topological object here. Deleting small components level by level is \emph{not} functorial --- a deleted component can later merge into a retained one, so independent thresholding need not commute with the inclusion maps --- and the filtered count is therefore retained only as a derived robustness summary,
\begin{equation}
 \widehat\beta_0^{\mathrm{post}}(m)
 =\beta_0\!\left(\mathcal F_{3,10^{-4}}[G_m];\F_2\right),
 \label{eq:postprocessed-beta0}
\end{equation}
and every topological claim is stated in terms of the barcode instead.

\begin{remark}[Four distinct counts]
\label{rem:beta-zero}
Four counts are kept apart: $\beta_0(G_m)$, the components of the declared finite graph, which the barcode decomposes; $\widehat\beta_0^{\mathrm{post}}(m)$ of Eq.~\eqref{eq:postprocessed-beta0}, a post-processed sensitivity summary; $\beta_0(\Pi^m)$, the Betti number of the continuum set, which is never computed and never claimed to equal the others; and $|\Lambda|$, the number of declared physical labels, fixed by the descriptor and independent of any mesh. The content of the benchmarks is the relation between the first and the last under refinement.
\end{remark}

The PCR3BP grid uses the same eight-neighbour graph only for the one-cell boundary band; its class probabilities come from first-hit labels and quadrature, not from graph components.

Implementation scaling is summarized in Appendix~\ref{app:complexity}.

\section{Numerical orbital benchmark}
\label{sec:numerical}

Orbital inference combines incomplete angular observations, hard admissibility
and long-horizon event preimages. This synthetic short-arc example couples an
admissible region, a simplified manifold of variations (MOV), a labelled
transport-conditioned filtration and a noisy follow-up experiment. We set the inference epoch to $t=0$.

\subsection{Dynamical and observational model}

We use planar Earth two-body dynamics with gravitational parameter $\mu_E=398600.4418\ \mathrm{km^3\,s^{-2}}$, Earth radius $R_E=6378.137\ \mathrm{km}$ and an equatorial station rotating at $\omega_E=7.2921159\times10^{-5}\ \mathrm{rad\,s^{-1}}$. At the reference epoch, the topocentric state is parameterized by an angular attributable $(\alpha,\dot\alpha)$ and systematic-ranging coordinates $(\rho,\dot\rho)$:
\begin{align}
 r_0&=q_0+\rho u(\alpha),\\
 v_0&=\dot q_0+\dot\rho u(\alpha)+\rho\dot\alpha u_\perp(\alpha).
 \label{eq:numerical-state}
\end{align}
The truth is $(\alpha,\dot\alpha,\rho,\dot\rho)=(50^\circ,1.5\times10^{-4}\ \mathrm{rad\,s^{-1}},20000\ \mathrm{km},0)$, corresponding to an elliptic orbit with semimajor axis $18617.9\ \mathrm{km}$, eccentricity $0.3325$ and period $7.02\ \mathrm{h}$. Seven topocentric angles are generated over $120\ \mathrm{s}$ with independent Gaussian noise of $2$ arcsec and fixed random seed 4.

The reference systematic-ranging grid contains $101\times101$ points over
\begin{equation}
 12000\le\rho\le35000\ \mathrm{km},
 \qquad
 -2.5\le\dot\rho\le2.5\ \mathrm{km\,s^{-1}}.
 \label{eq:numerical-grid}
\end{equation}
The admissible region requires a bound orbit, pericentre altitude at least $200\ \mathrm{km}$ and apocentre radius at most $50000\ \mathrm{km}$. At each admissible grid point, $(\alpha,\dot\alpha)$ is corrected by nonlinear least squares. The resulting map $q=(\rho/10^4\ \mathrm{km},\dot\rho/(1\ \mathrm{km\,s^{-1}}))\mapsto(\alpha,\dot\alpha,\rho,\dot\rho)$ parameterizes the planar corrected MOV. The baseline profile measure relative to uniform $q$-area is
\begin{equation}
 w_{ij}=Z^{-1}\exp\left[-\frac12\left(\chi^2_{ij}-\chi^2_{\min}\right)\right].
 \label{eq:numerical-weights}
\end{equation}
We also compute two reference-measure corrections requested by the framework. For induced MOV volume, embed the surface in the declared dimensionless coordinates
\[
 s(q)=\left(\alpha/(1\ \mathrm{rad}),\dot\alpha/(10^{-4}\ \mathrm{rad\,s^{-1}}),q_1,q_2\right)
 \]
\begin{equation} 
 J_{\mathrm{MOV}}(q)=\sqrt{\det(Ds(q)^{\mathsf T}Ds(q))},
 \label{eq:MOV-volume}
\end{equation}
and multiply Eq.~\eqref{eq:numerical-weights} by $J_{\mathrm{MOV}}$. For an approximate marginal posterior relative to uniform $(\rho,\dot\rho)$ area and a locally flat attributable prior, Laplace integration of $(\alpha,\dot\alpha)$ multiplies the profile likelihood by $\det(J_a^{\mathsf T}J_a)^{-1/2}$, where $J_a$ is the standardized residual Jacobian with respect to the attributable. These three measures are each normalized before mass ranking. At $\grid{101}$ nodes, $\chi^2_{\min}=5.356$ for five residual degrees of freedom and 2782 nodes are admissible. In three-dimensional orbit determination the admissible region and MOV have higher dimension, but the conclusions expected to survive are the need for an explicit induced/reference measure, mass-level tracking and label-boundary refinement; the number and physical meaning of return classes are model-specific and are not extrapolated from this planar test.

\begin{table*}[t]
\caption{\footnotesize Declared settings of the short-arc benchmark.}
\label{tab:benchmark-settings}
\centering
\small
\begin{tabularx}{0.94\textwidth}{@{}p{0.34\textwidth}X@{}}
\toprule
Quantity & Value \\
\midrule
Observational arc & 7 angular measurements in 120 s \\
Angular noise & 2 arcsec, independent Gaussian \\
Systematic-ranging domain & $\rho\in[12000,35000]$ km, $\dot\rho\in[-2.5,2.5]$ km s$^{-1}$ \\
Admissibility & elliptic; $r_p\ge R_E+200$ km; $r_a\le50000$ km \\
Grid and correction & $101\times101$ points; nonlinear correction of $(\alpha,\dot\alpha)$ \\
Propagation & fourth-order Runge--Kutta; 30 s step to the 12 h gate \\
Inferential status & profile, induced-MOV-volume and Laplace-marginal measures, all with declared dimensionless/reference conventions \\
Target epoch & $T=12$ h \\
Reacquisition gate & $|\operatorname{wrap}(\alpha_T-\alpha_T^\star)|\le15^\circ$, $|\rho_T-\rho_T^\star|\le12000$ km \\
Connectivity & eight-neighbour graph; post-processing retains components with at least three cells and profile mass $10^{-4}$; four-neighbour sensitivity also reported \\
\bottomrule
\end{tabularx}
\end{table*}

\subsection{Admissible region, MOV and reference-measure conventions}

Figure~\ref{fig:shortarc-geometry}(a) shows the admissible region obtained from the measured attributable. Panel (b) shows the profile likelihood on the corrected MOV, three profile-mass boundaries, and the 95\% ellipse of the moment-matched covariance in $(\rho,\dot\rho)$. The covariance ellipse is a valid second-moment summary, but it is a single connected set and spans states with different future return behaviour.

Let $F_T$ map an initial corrected orbit to topocentric angle and range at $T=12$ h. The reacquisition gate $H$ is centered on the truth output and is given in Table~\ref{tab:benchmark-settings}. The conditioned filtration is
\begin{equation}
 \Pi_{0,T}^m(H)=C_0^m\cap F_T^{-1}(H).
 \label{eq:numerical-pathway}
\end{equation}
At $\grid{101}$ nodes and requested 99.5\% profile mass, the complete-tie mask realizes mass $0.995008$ and the gate carries $0.097670$. Three mass levels must be distinguished and are reported separately throughout: the \emph{requested} mass, the \emph{realized} tie-complete mass, and the continuum \emph{regular} mass of Eqs.~\eqref{eq:enclosed-mass-map}--\eqref{eq:credible}. Because the law of the index is atomic the third does not exist here; the realized level is a finite rank in the sense of Corollary~\ref{cor:atomic-mass-rank}. The two components carry $0.052629$ and $0.045041$ and have profile-weighted mean horizon-to-period ratios $1.73$ and $0.85$; Figure~\ref{fig:shortarc-geometry}(c) shows them. The covariance ellipse contains neither their labels nor the non-event gap.

\begin{figure*}[pos=t]
\centering
\includegraphics[width=0.96\textwidth]{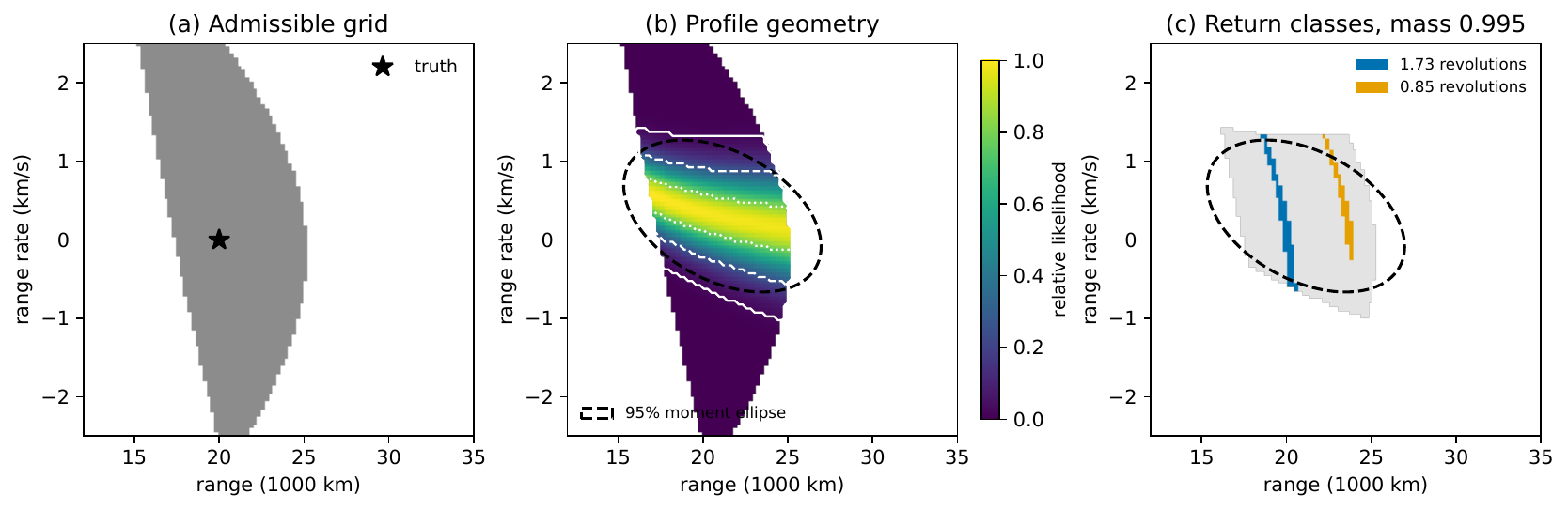}
\caption{\footnotesize Short arc geometry. (a) Admissibility. (b) Corrected MOV, profile-mass boundaries and moment-matched covariance ellipse. (c) The two gate-preimage classes at requested mass 0.995, labelled by their profile-weighted mean number of revolutions.}
\label{fig:shortarc-geometry}
\end{figure*}

Table~\ref{tab:shortarc-measures} compares the three reference-measure conventions at the declared scaling. Volume correction and Laplace marginalization preserve two components and alter either class mass by less than $1.5\times10^{-5}$, so the reported topology is not an artifact of profile weighting in this benchmark. Two qualifications are needed, and neither was stated in the previous version. First, the agreement is a computed result for this configuration, not a general equivalence of profiling and marginalization. Second, and more importantly, the induced volume of Eq.~\eqref{eq:MOV-volume} is an area element of an ambient metric, and that metric is fixed only once the angular coordinates are made dimensionless: ``induced MOV volume'' is therefore itself a declared convention. At the declared scaling the correction is nearly inactive, with 5--95\% quantiles $1.000031$--$1.000145$, so Table~\ref{tab:shortarc-measures} is a low-power test.

The lower half of Table~\ref{tab:shortarc-measures} makes the convention explicit by varying the angular-rate scale over four decades. The volume factor ranges from $1+2\times10^{-8}$ at $10^{-2}\ \mathrm{rad\,s^{-1}}$ to a median of $1.448$ with maximum $2.269$ at $10^{-6}\ \mathrm{rad\,s^{-1}}$. Even that factor-two reweighting leaves the component count at two, while moving the smaller class mass by $9.2\%$. The count is thus robust to the convention over the whole tested range, whereas the masses are not, and the declared scaling must accompany any quoted mass.

\begin{table*}[t]
\centering
\small
\caption{\footnotesize Reference-measure conventions at requested mass 0.995 on the $\grid{101}$ grid. Top: the three measures at the declared angular scaling of Eq.~\eqref{eq:MOV-volume}. Bottom: dependence of the induced-volume convention on the declared angular-rate scale, whose third entry is that same declared scaling.}
\label{tab:shortarc-measures}
\begin{tabular}{@{}lrrrr@{}}
\toprule
Measure & Gate mass & Components & Class 1 mass & Class 2 mass \\
\midrule
Profile counting & 0.097670 & 2 & 0.052629 & 0.045041 \\
Induced MOV volume & 0.097670 & 2 & 0.052629 & 0.045041 \\
Laplace marginal & 0.097661 & 2 & 0.052635 & 0.045027 \\
\midrule
Rate scale (rad s$^{-1}$) & $J$ 5\%/50\%/95\% & Components & \multicolumn{2}{c}{Class 2 mass} \\
\midrule
$10^{-2}$ & 1.000000/1.000000/1.000000 & 2 & \multicolumn{2}{c}{0.045041} \\
$10^{-3}$ & 1.000000/1.000001/1.000001 & 2 & \multicolumn{2}{c}{0.045041} \\
$10^{-4}$ & 1.000031/1.000055/1.000145 & 2 & \multicolumn{2}{c}{0.045041} \\
$10^{-5}$ & 1.003069/1.005473/1.014442 & 2 & \multicolumn{2}{c}{0.044958} \\
$10^{-6}$ & 1.270746/1.448331/1.977202 & 2 & \multicolumn{2}{c}{0.040905} \\
\bottomrule
\end{tabular}
\end{table*}

\subsection{The transport-conditioned barcode}
\label{sec:barcode}

The gate is a fixed set, so the conditioned sets of Eq.~\eqref{eq:numerical-pathway} are nested in $m$ and their degree-zero homology is a genuine persistence module. Its barcode is computed exactly, with no component filtering, by a union--find sweep in the mass coordinate: each admissible cell $x$ carries its complete-tie rank $\rho(x)=\mu\{w\ge w(x)\}$, cells are inserted in increasing $\rho$ with tied cells inserted simultaneously, and the elder rule assigns each merge to the younger component. Bars are reported in enclosed-mass birth/death coordinates.

The outcome is unusually clean and is given in the middle columns of Table~\ref{tab:shortarc-refinement}. From $\grid{75}$ onward the module has exactly two \emph{essential} bars, that is, two components that are born and never merge, and every finite bar has length exactly zero. A zero-length bar is the trivial event of a newly inserted cell joining a component that already exists at that mass level; the module therefore contains no nontrivial finite bar at all, and
\[
 \beta_0\!\left(\Pi_{0,T}^m(H)\right)=2
 \qquad\text{for every }m\in[0.0837,\,0.9950]
\]
on the $\grid{101}$ grid, with the two births at $m=0.0291$ and $m=0.0837$. The two essential classes carry mean revolution counts $1.734$ and $0.849$, so they are exactly the two physical return classes. No threshold, no filter and no smallest-feature convention enters this statement.

The coarse $\grid{51}$ mesh behaves differently and instructively: it has eight essential bars rather than two. Their revolution counts are $1.71, 0.85, 1.76, 0.87, 1.78, 0.88, 1.82, 0.89$, so the six extra components are fragments of the same two physical classes, carrying $0.012$, $0.0088$, $0.0021$, $0.0014$, $0.0002$ and $0.00015$ posterior mass. This is worth stating plainly, because it delimits what persistence can do here: bar length does not discriminate the fragments, since all eight bars are essential. What discriminates them is refinement and class mass, not persistence. The barcode is nevertheless the right primitive, because it makes the failure visible and functorial rather than hiding it inside a filter.

By Corollary~\ref{cor:atomic-mass-rank}, the bottleneck distance between the $\grid{101}$ and $\grid{201}$ barcodes is at most the $b_h=0.07395$ of Table~\ref{tab:mass-rank-check}. The observed displacement of the two essential births is $0.0053$ and $0.0079$, well inside that budget.

\begin{figure*}[pos=t]
\centering
\includegraphics[width=0.96\textwidth]{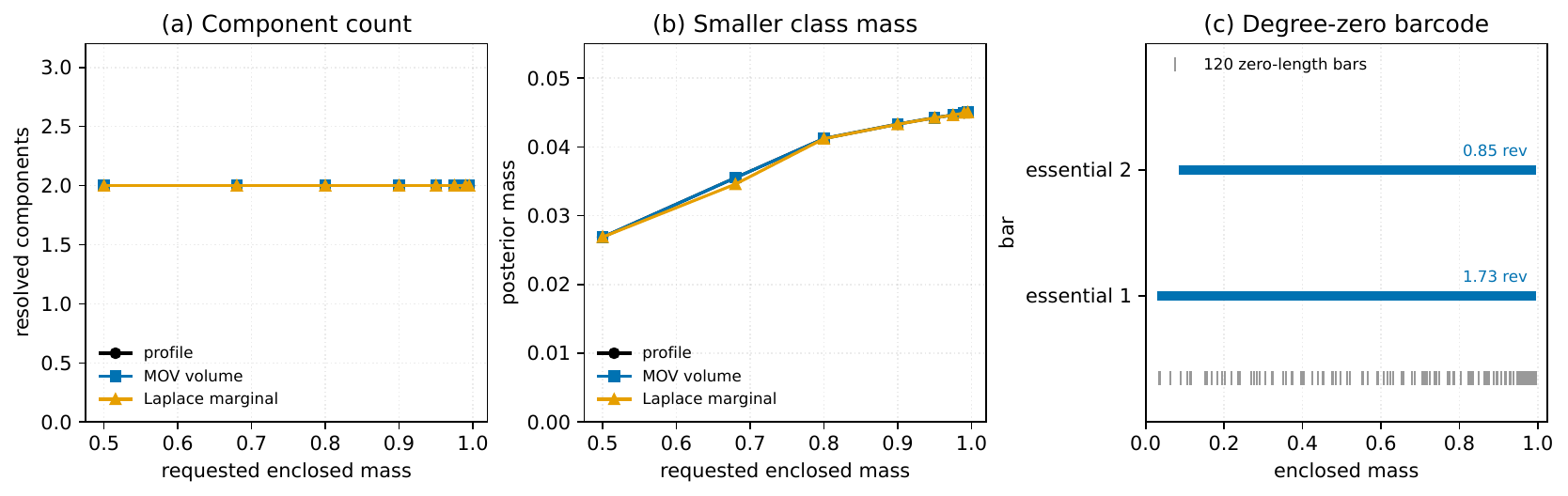}
\caption{\footnotesize Mass filtration and barcode on the $\grid{101}$ grid. (a) Component count and (b) smaller class mass across the tested mass range for the three reference-measure conventions. (c) The degree-zero barcode: two essential bars, annotated by their mean revolution count, and 120 zero-length bars.}
\label{fig:shortarc-barcode}
\end{figure*}

\subsection{Grid refinement and comparison with a label histogram}

Table~\ref{tab:shortarc-refinement} contrasts the barcode and the raw graph counts with the elementary baseline, a two-bin histogram of the revolution label. The histogram returns two classes on every mesh, including $\grid{51}$ where the gate set is in eight pieces. This is not because the histogram is more accurate: it is right by construction, since it does not attempt to resolve spatial structure at all, and for the same reason it cannot warn the user that the mesh is too coarse to support a spatial statement. The mass-indexed components carry that warning, and refinement removes it from $\grid{75}$ onward under eight-neighbour adjacency and from $\grid{151}$ onward under four-neighbour adjacency. That asymmetry between what the two reports can and cannot signal is the concrete added value here, not a difference in the final class count.

\begin{table*}[t]
\centering
\small
\caption{\footnotesize Barcode, raw graph counts and the elementary baseline at requested mass 0.995. Births are enclosed-mass coordinates of the essential bars; every finite bar has length exactly zero at every tested mesh, so no filter is needed. The histogram cannot distinguish two classes from eight coarse fragments of those same two classes.}
\label{tab:shortarc-refinement}
\begin{tabular}{@{}rrrlrrr@{}}
\toprule
Grid & Histogram & Ess. bars & Essential births & 8-nb $\beta_0$ & 4-nb $\beta_0$ & Two largest masses \\
\midrule
$\grid{51}$  & 2 & 8 & 0.0611, 0.0914, $\ldots$, 0.9946 & 8 & 8 & 0.036, 0.029 \\
$\grid{75}$  & 2 & 2 & 0.0362, 0.0803 & 2 & 7 & 0.053, 0.047 \\
$\grid{101}$ & 2 & 2 & 0.0291, 0.0837 & 2 & 4 & 0.053, 0.045 \\
$\grid{151}$ & 2 & 2 & 0.0245, 0.0756 & 2 & 2 & 0.052, 0.046 \\
$\grid{201}$ & 2 & 2 & 0.0238, 0.0758 & 2 & 2 & 0.052, 0.047 \\
\bottomrule
\end{tabular}
\end{table*}

\subsection{Noise ensemble}

\begin{table*}[t]
\centering
\small
\caption{\footnotesize One-hundred-realization noise and graph robustness at requested mass 0.995 on the $\grid{101}$ grid. All quantities refer to the declared simulation design: one truth, one cadence, one Gaussian noise law, one mesh.}
\label{tab:shortarc-noise}
\begin{tabular}{@{}lr@{}}
\toprule
Quantity & Result \\
\midrule
Realizations with two eight-neighbour components & 100/100 \\
Clopper--Pearson 95\% interval for that proportion & $[0.9638,\,1]$ \\
Four-neighbour component counts & $2$: 84, $3$: 10, $4$: 5, $7$: 1 \\
Smaller class mass, 5/25/50/75/95\% & 0.00295/0.01415/0.02780/0.04103/0.04846 \\
Fraction with smaller mass $\ge0.01$ / $\ge0.02$ & 0.80 / 0.66 \\
Gate mass 5/50/95\% & 0.08092/0.08842/0.09979 \\
\bottomrule
\end{tabular}
\end{table*}

\begin{figure*}[pos=t]
\centering
\includegraphics[width=0.96\textwidth]{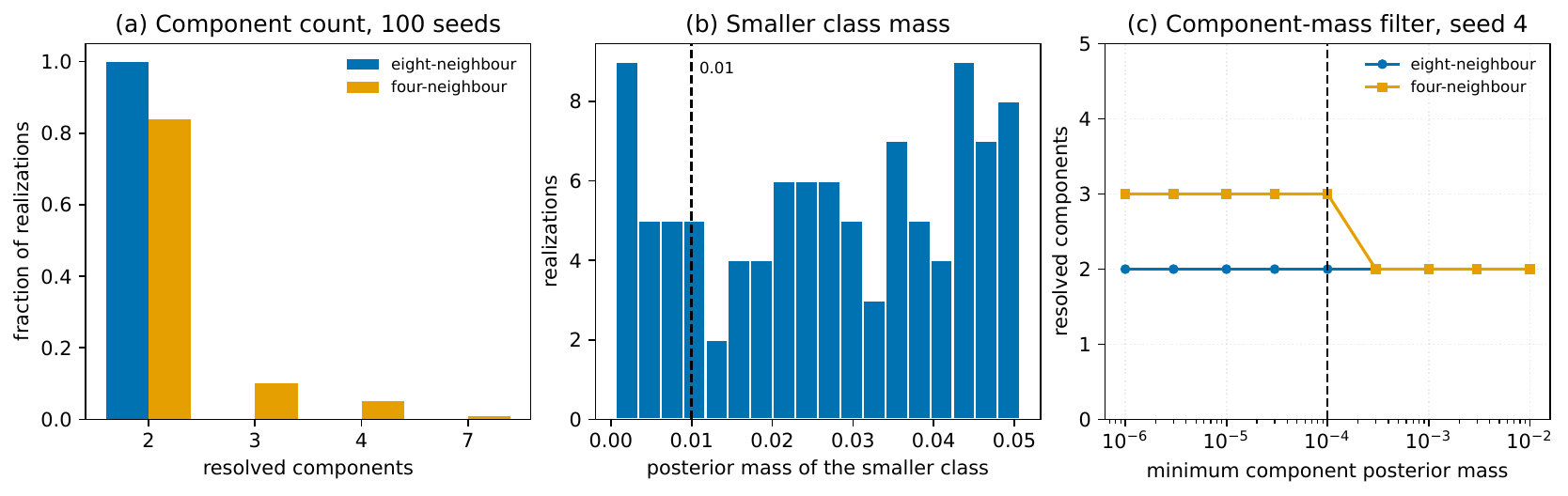}
\caption{\footnotesize Noise and post-processing diagnostics. (a) Component counts over 100 independent 2-arcsec noise realizations, under both adjacencies. (b) Empirical distribution of the smaller class mass. (c) At seed 4, after the three-cell rule has been applied, the eight-neighbour count is two throughout the tested mass-filter range whereas the four-neighbour count falls from three to two when the minimum component mass rises from $10^{-4}$ to $3\times10^{-4}$. Before the three-cell rule the raw four-neighbour count at this mesh is four, as reported in Table~\ref{tab:shortarc-refinement}.}
\label{fig:shortarc-noise}
\end{figure*}

Every one of the 100 tested realizations gives two eight-neighbour components. This is evidence under one truth, one observation cadence, one Gaussian noise law, one mesh and one adjacency, and it is reported as such: zero failures in 100 independent trials is consistent with any per-realization success probability above $0.9638$ at the 95\% Clopper--Pearson level, and no population-level guarantee is intended. The practical weight of the second class is not constant --- 20\% of realizations put less than $0.01$ posterior mass there and 34\% less than $0.02$ --- so robustness of multiplicity is not robustness of operational relevance. The four-neighbour distribution shows that graph choice remains material at the reference mesh. The ensemble uses exact Kepler propagation for speed; against the production Runge--Kutta calculation at seed 4 the maximum angle and range differences are $6.0\times10^{-6}$ rad and $0.008$ km, far below the gate widths. Gate factors $0.8$--$1.2$ retain two eight-neighbour classes.

\subsection{Instantiating the future-experiment block}
\label{sec:design}

For a follow-up at elapsed time $\tau$, define the wrapped-angle output
\begin{equation}
 Z_\tau^{(\sigma)}=\operatorname{wrap}[\alpha_\tau(x)+\xi],
 \,\, \xi\sim\mathcal N(0,\sigma^2),
 \,\, \sigma\in\{2\ \mathrm{arcsec},40^\circ\}
 \label{eq:shortarc-future-experiment}
\end{equation}
The 2-arcsec case represents the original precise sensor; $40^\circ$ is a deliberately degraded planning channel chosen to avoid immediate label-information saturation. The observable is genuinely circular, so the likelihood is the wrapped normal $p(z\mid x)=\sum_k\mathcal N(z;\alpha_\tau(x)+2\pi k,\sigma^2)$ on $[-\pi,\pi)$, with the alias sum truncated at $|k|\le8$. The label variable $\Lambda$ is defined only on the two gate components, so every mutual information below is conditional on the reacquisition-gate event, whose mass is $0.097670$; the conditional label entropy is $0.995642$ bit.

Three conventions are audited rather than asserted. The information integral uses the periodic rectangle rule, spectrally accurate here and stable to ten digits from 128 nodes upward, with 2048 nodes used. An independent Gauss--Hermite mixture-expectation scheme, whose resolution is set by the kernel width rather than by the circle, agrees to $1.6\times10^{-6}$ bit at every tested epoch and is used for the precise sensor, which the circle rule cannot resolve. Alias terms are included and matter at the reported precision: omitting them shifts the degraded value by $7.2\times10^{-5}$ bit at $\tau=8.75$ h.

The Fisher pullback is evaluated at the profile mode by re-propagating perturbed states in the declared dimensionless coordinates $q=(\rho/10^4\ \mathrm{km},\dot\rho/(1\ \mathrm{km\,s^{-1}}))$, not by differencing grid neighbours. Because the observable is circular, each one-sided increment is taken against the centre value, and a step is rejected as wrap-unsafe when either increment exceeds $0.9\pi$; a sweep over $h\in[2.5\times10^{-4},5\times10^{-2}]$ accompanies every value. The audit matters: at $\tau=12$ h the map $\alpha_\tau(q)$ has very large curvature, $h=5\times10^{-2}$ is wrap-unsafe, and the value drifts from $1.5\times10^{3}$ to $4.96\times10^{4}$ under refinement, converging only at $h=2.5\times10^{-4}$ with $8\times10^{-3}$ relative residual drift; at $\tau=8.75$ h it converges to $105.73$ with residual drift $6\times10^{-6}$. The dominant eigenvalue is meaningful only relative to the declared scaling: unlike mutual information, which is invariant under invertible output reparameterization, it is not an intrinsic scalar under state-coordinate rescaling. That caveat governs the comparison below and is not merely a caption remark.

\begin{figure*}[pos=t]
\centering
\includegraphics[width=0.74\textwidth]{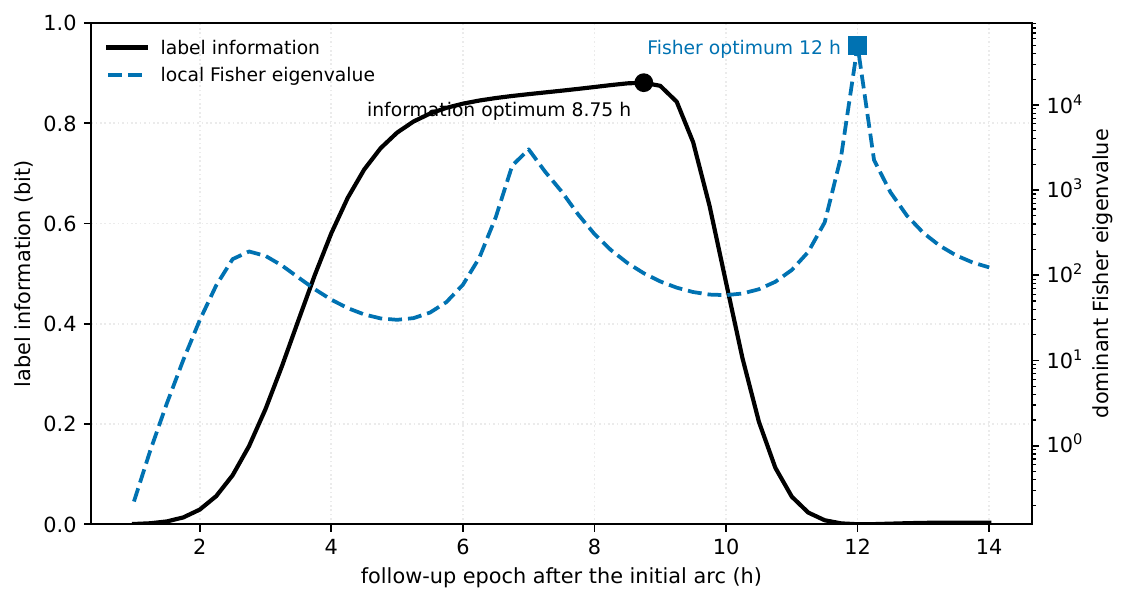}
\caption{\footnotesize Observation-design comparison for the degraded 40-degree channel over 53 epochs at 0.25 h spacing. Label information peaks at $8.75$ h; the dominant local Fisher eigenvalue peaks at 12 h, where the label information has collapsed to $5\times10^{-5}$ bit. Note the logarithmic right-hand axis.}
\label{fig:shortarc-design}
\end{figure*}

The comparison is made over 53 epochs from 1 to 14 h at 0.25 h spacing, so the optima below are optima over a dense grid rather than over six sampled epochs. For the precise sensor the information stays within $0.013$ bit of the label entropy at every tested epoch, with minimum $0.9830$ bit at $13.25$ h, and is therefore saturated: it cannot rank epochs at all. For the degraded channel the information rises to a maximum of $0.8808$ bit at $\tau=8.75$ h and then collapses, reaching $5.1\times10^{-5}$ bit at 12 h. The dominant Fisher eigenvalue behaves in the opposite way: it is $105.7$ at the information optimum and attains its global maximum $4.96\times10^{4}$ at exactly 12 h. The two criteria are not merely offset. Over the 53 epochs their Spearman rank correlation is $-0.008$, that is, indistinguishable from zero.

The mechanism is geometric and explains why the disagreement is structural rather than incidental. The gate of Table~\ref{tab:benchmark-settings} is centred on the truth output at $T=12$ h, so at that epoch both return classes are inside the gate by construction and their class-conditional angle distributions nearly coincide; a coarse sensor then extracts almost nothing about the label. Local sensitivity of $\alpha_\tau$ to the state is simultaneously largest there, because the two classes are being folded together. Moreover, the Fisher pullback of a Gaussian channel scales as $\sigma^{-2}$, so the \emph{epoch} that maximizes it does not depend on sensor quality at all, whereas the epoch that maximizes $I(\Lambda;Z_\tau)$ does. A criterion that cannot see the sensor cannot rank epochs for a task that depends on it. This is Corollary~\ref{cor:fisher-not-sufficient} realized numerically.

\section{PCR3BP benchmark: why transport labels are necessary}
\label{sec:pcr3bp}

The PCR3BP gives a second finite-resolution example. The transit/non-transit classification near the collinear necks goes back to Conley and McGehee, and its global invariant-manifold and tube-dynamics development underlies modern low-energy transport and mission design \cite{Conley1968,McGehee1969,Koon2000,KoonBook2011,JorbaMasdemont1999,GomezMondelo2001,Wiggins1992}. Related Lagrangian constructions connect invariant structures to uncertainty quantification \cite{GarciaSanchez2023,GarciaSanchez2025}. DUG adds posterior-mass incidence and finite-resolution error accounting to these physical transport classes.

\subsection{Model, section and posterior}

In standard synodic nondimensional variables, the equations are
\begin{align}
 \ddot x-2\dot y &= \Omega_x(x,y),
 &
 \ddot y+2\dot x &= \Omega_y(x,y),
 \label{eq:pcr3bp-equations}\\
 \Omega(x,y) &= \frac{x^2+y^2}{2}
 +\frac{1-\mu}{r_1}+\frac{\mu}{r_2},
 & \\
 C_J &=2\Omega-\dot x^2-\dot y^2,
 \label{eq:pcr3bp-jacobi}
\end{align}
where $r_1^2=(x+\mu)^2+y^2$, $r_2^2=(x-1+\mu)^2+y^2$, and $\mu=0.012150585609624$ is the Earth--Moon mass parameter. We set $C_J=3.16$, below the $L_1$ and $L_2$ critical values, so both necks are open. Numerically,
\begin{equation}
 x_{L_1}=0.8369151258,
 \qquad
 x_{L_2}=1.1556821654.
 \label{eq:pcr3bp-lagrange}
\end{equation}

The uncertainty is placed on the fixed-energy Poincar\'e section
\begin{equation}
 \Sigma_C=\left\{(x,0,\dot x,\dot y):
 \dot y>0,
 \;2\Omega(x,0)-\dot x^2-\dot y^2=C_J
 \right\},
 \label{eq:pcr3bp-section}
\end{equation}
parameterized by $(x,\dot x)$. We declare the section-coordinate area $\nu_\Sigma=\dd x\,\dd\dot x$ as reference measure; under reparameterization it must be pushed forward with the probability measure. The density relative to $\nu_\Sigma$ is a Gaussian truncated to the energetically admissible lunar lobe. Its untruncated Gaussian location is
\begin{equation}
 \bar z=(1.11,-0.07),
 \label{eq:pcr3bp-mean}
\end{equation}
with scales $(0.018,0.055)$ and correlation $-0.5$ before truncation and renormalization, on the energetic lobe intersected with $x\in[1.02,x_{L_2}]$, $\dot x\in[-0.26,0.12]$. Conditional-Gaussian quadrature shows the rectangle omits $3.04\times10^{-4}$ of the already truncated mass, and the $\grid{181}$ normalizer differs from the one-dimensional reference integral by $1.38\times10^{-4}$ relatively, reduced to $8.89\times10^{-6}$ on a $\grid{1441}$ audit. Section and epoch are fixed, so we write $C^m$. Its highest-density credible sets are connected at every sampled mass: the finite-grid posterior module contains no class multiplicity.

We propagate to $T=12$ nondimensional units, about $52.2$ days; this captures all but less than $10^{-4}$ of the selected mass on the finest grid, and the lower block of Table~\ref{tab:pcr3bp-refinement} checks $T=10$ and $14$. Labels are assigned solely by the first terminal event: inward crossing of $x=x_{L_1}$ with $\dot x<0$, outward crossing of $x=x_{L_2}$ with $\dot x>0$, or entry into the disk centred at $(1-\mu,0)$ with radius $r_2=0.02$. That disk is a deliberately enlarged close-approach set of about $7.7\times10^3$ km, not the lunar surface, and its radius sensitivity is reported below. ``No terminal event within $T$'' means only that no declared terminal surface was reached before the horizon; it is not a claim of permanent residence. These mutually exclusive rules do not use the stable-manifold trace.

Backward propagation of the stable direction of a planar $L_1$ Lyapunov orbit, obtained by differential correction \cite{Richardson1980}, produces the overlay in Fig.~\ref{fig:pcr3bp-transport}. The overlay is an interpretation aid: it is not an input to classification or quadrature, it is not a uniformly accurate separator of the three finite-time classes, and no quantity reported in this paper depends on it.

\begin{table*}[t]
\vspace*{1.5em}
\centering
\caption{\footnotesize Declared PCR3BP benchmark and grid-quadrature results at 99\% credible mass. State, scale and integration quantities are nondimensional unless days are shown; the close-approach disk is an enlarged artificial set, not the physical lunar surface. Masses are quoted on the finest grid with the label-split intervals of Eq.~\eqref{eq:pcr3bp-split-interval}.}
\label{tab:pcr3bp-settings}
\begin{tabular}{@{}ll@{}}
\toprule
Quantity & Value \\
\midrule
Earth--Moon mass parameter $\mu$ & $0.012150585609624$ \\
Jacobi constant $C_J$ & $3.16$ \\
Poincar\'e coordinates & $(x,\dot x)$ on $y=0$, $\dot y>0$ \\
Untruncated Gaussian location & $(1.11,-0.07)$ \\
Untruncated Gaussian scales & $(0.018,0.055)$ \\
Untruncated Gaussian correlation & $-0.5$ \\
Reference measure & $\nu_\Sigma=\dd x\,\dd\dot x$ \\
Finite horizon & $T=12$ ($52.2$ days); sensitivity at $T=10,14$ \\
Encounter set & center $(1-\mu,0)$, radius $0.02$; artificial close-approach disk \\
$L_1$ first-hit mass; split diagnostic interval & $0.42050$; $[0.38872,0.50872]$ \\
$L_2$ first-hit mass; split diagnostic interval & $0.36665$; $[0.32970,0.44970]$ \\
Close approach to the disk $r_2=0.02$; split interval & $0.20279$; $[0.15033,0.27033]$ \\
No terminal event within $T$ & $5.5\times10^{-5}$ on the $\grid{721}$ grid \\
Omitted fraction of truncated-lobe mass & $3.04\times10^{-4}$ \\
Posterior-weighted 99th percentile of $|\Delta C_J|$ & $7.75\times10^{-7}$ at step $1.5\times10^{-3}$ \\
\bottomrule
\end{tabular}
\end{table*}

\subsection{Connected uncertainty, multiple first-hit outcome classes}

Figure~\ref{fig:pcr3bp-transport}(a) shows two high-density trajectories starting from nearby section states and leaving the lunar region through different necks. Panel (b) displays the first-hit outcome partition on $\Sigma_C$, the 50\%, 90\% and 99\% credible contours, and a computed trace of $W^s(\gamma_{L_1})\cap\Sigma_C$. The posterior geometry alone is a nested family of connected ovals. The transport descriptors cut those ovals into labelled subsets whose destinations differ.

Within the 99\% credible set on the $\grid{181}$ grid with step $1.5\times10^{-3}$, the quadrature masses are $0.422$ for $L_1$, $0.365$ for $L_2$ and $0.203$ for enlarged-disk close approach. Both exit labels occur at every sampled mass from $m=0.50$ to $m=0.995$, while $\beta_0(C^m)=1$. These values locate the finite-grid result; the conservative intervals derived below delimit the quantitative claim.

\begin{figure*}[pos=t]
\centering
\includegraphics[width=\textwidth]{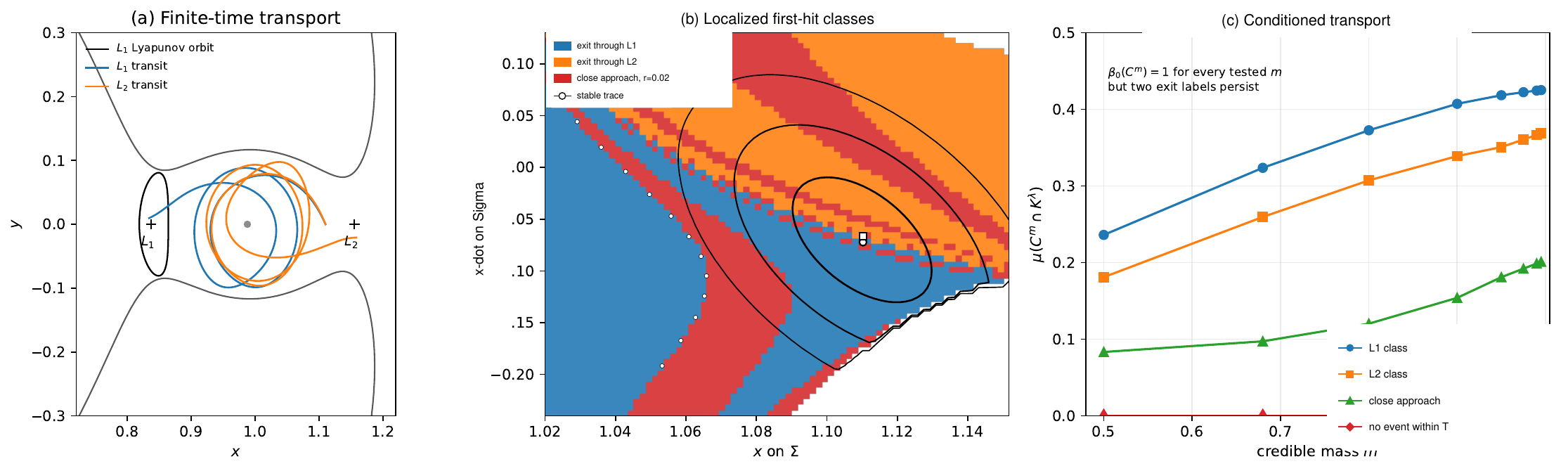}
\caption{\footnotesize PCR3BP transport benchmark. (a) Two nearby states transit through different necks. (b) First-hit classes and credible contours. The stable-manifold trace is an explanatory overlay: it does not generate the labels, and it is not claimed to approximate the whole finite-time class boundary. (c) Mass in the labelled conditioned filtration.}
\label{fig:pcr3bp-transport}
\end{figure*}

The dependency is explicit: the posterior report alone contains one connected high-density family, and the first-hit descriptor supplies the labelled partition. Theorem~\ref{thm:skeleton-obstruction} applies only when the declared future experiment omits those event labels; an event-complete experiment would contain them by construction.

\subsection{Event localization, boundary scaling and class intervals}
\label{sec:pcr3bp-scaling}

The vectorized fourth-order Runge--Kutta calculation stops at the first terminal
set or at $T$. Within every step, all terminal surfaces whose signed event function changes from positive to nonpositive are localized by bisection to an elapsed-time bracket below $7.2\times10^{-10}$. Direction is checked at the localized crossing, and if more than one surface is crossed in a step the earliest localized event is selected. The code records as ambiguous any two localized event times within ten bracket tolerances; the credible mass of such cases is zero in every reported run. Table~\ref{tab:pcr3bp-refinement} changes both the section grid and the fixed step. Step halving leaves the three displayed masses unchanged on the $\grid{181}$ grid and reduces the Jacobi drift quantile by more than an order of magnitude; from $\grid{181}$ to $\grid{271}$ every class mass changes by at most $0.002$.

\begin{table*}[t]
\centering
\small
\caption{\footnotesize PCR3BP refinement and task sensitivity at 99\% credible mass. Upper block: grid and step refinement at $T=12$, $r_2=0.02$. Lower block: horizon and encounter-radius sensitivity on the $\grid{181}$ grid with step $1.5\times10^{-3}$. $B_1$ is the posterior mass in the one-cell eight-neighbour band adjacent to a change of first-hit label; $q_{0.99}$ is the weighted 99th percentile of $|\Delta C_J|$. Ambiguous-event mass is zero in all rows. ``Close app.'' is close approach to the enlarged disk of radius $r_2$.}
\label{tab:pcr3bp-refinement}
\begin{tabular}{@{}lrrrrrr@{}}
\toprule
Setting & $L_1$ & $L_2$ & Close app. & No event & $B_1$ & $q_{0.99}$ \\
\midrule
$\grid{91}$, step $3.0\times10^{-3}$  & 0.420 & 0.365 & 0.205 & 0 & 0.452 & $9.58\times10^{-6}$ \\
$\grid{181}$, step $3.0\times10^{-3}$ & 0.422 & 0.365 & 0.203 & 0 & 0.320 & $1.00\times10^{-5}$ \\
$\grid{181}$, step $1.5\times10^{-3}$ & 0.422 & 0.365 & 0.203 & 0 & 0.320 & $7.73\times10^{-7}$ \\
$\grid{271}$, step $1.5\times10^{-3}$ & 0.420 & 0.367 & 0.203 & $9.9\times10^{-5}$ & 0.248 & $7.75\times10^{-7}$ \\
\midrule
$T=10$, $r_2=0.020$ & 0.4218 & 0.3652 & 0.2026 & 0.00035 & --- & --- \\
$T=14$, $r_2=0.020$ & 0.4219 & 0.3652 & 0.2029 & 0 & --- & --- \\
$T=12$, $r_2=0.018$ & 0.4295 & 0.3820 & 0.1785 & 0 & --- & --- \\
$T=12$, $r_2=0.022$ & 0.4131 & 0.3486 & 0.2283 & 0 & --- & --- \\
\bottomrule
\end{tabular}
\end{table*}

\paragraph{How fast does the unresolved boundary mass decrease?}
The more discriminating diagnostic compares nested grids. A coarse cell is
called \emph{resolved} when its four admissible corners carry one common
first-hit label; $U_h$ is the fine-grid credible mass lying in all other coarse
cells, and $D_h$ is the fine-grid mass whose directly integrated label
disagrees with the common label of a resolved coarse cell. With a single pair
one cannot tell whether $U_h$ is decreasing usefully, so Table~\ref{tab:pcr3bp-disagreement}
now reports four nested comparisons across the five grids
$\grid{46}$, $\grid{91}$, $\grid{181}$, $\grid{361}$ and $\grid{721}$.

\begin{table*}[t]
\centering
\scriptsize
\caption{\footnotesize Nested-grid stability of PCR3BP first-hit labels at 99\% credible mass and step $1.5\times10^{-3}$. $h$ is the coarse section cell width. The last two columns split $U_h$ by the fine-grid label of the unresolved mass.}
\label{tab:pcr3bp-disagreement}
\begin{tabular}{@{}rrrrrrrr@{}}
\toprule
Coarse $\to$ fine & $h$ & $U_h$ & $D_h$ & $L_1$ & $L_2$ & Close app. & $U_h$ split ($L_1$/$L_2$/close) \\
\midrule
$\grid{46}\to\grid{91}$   & $3.02\times10^{-3}$ & 0.44281 & $6.70\times10^{-3}$ & 0.41960 & 0.36538 & 0.20503 & 0.132/0.166/0.145 \\
$\grid{91}\to\grid{181}$  & $1.51\times10^{-3}$ & 0.32053 & $1.46\times10^{-3}$ & 0.42187 & 0.36524 & 0.20289 & 0.096/0.100/0.125 \\
$\grid{181}\to\grid{361}$ & $7.54\times10^{-4}$ & 0.19862 & $1.01\times10^{-4}$ & 0.42083 & 0.36705 & 0.20205 & 0.052/0.063/0.083 \\
$\grid{361}\to\grid{721}$ & $3.77\times10^{-4}$ & 0.11749 & $1.25\times10^{-3}$ & 0.42050 & 0.36665 & 0.20279 & 0.031/0.036/0.051 \\
\bottomrule
\end{tabular}
\end{table*}

\begin{figure*}[pos=t]
\centering
\includegraphics[width=0.86\textwidth]{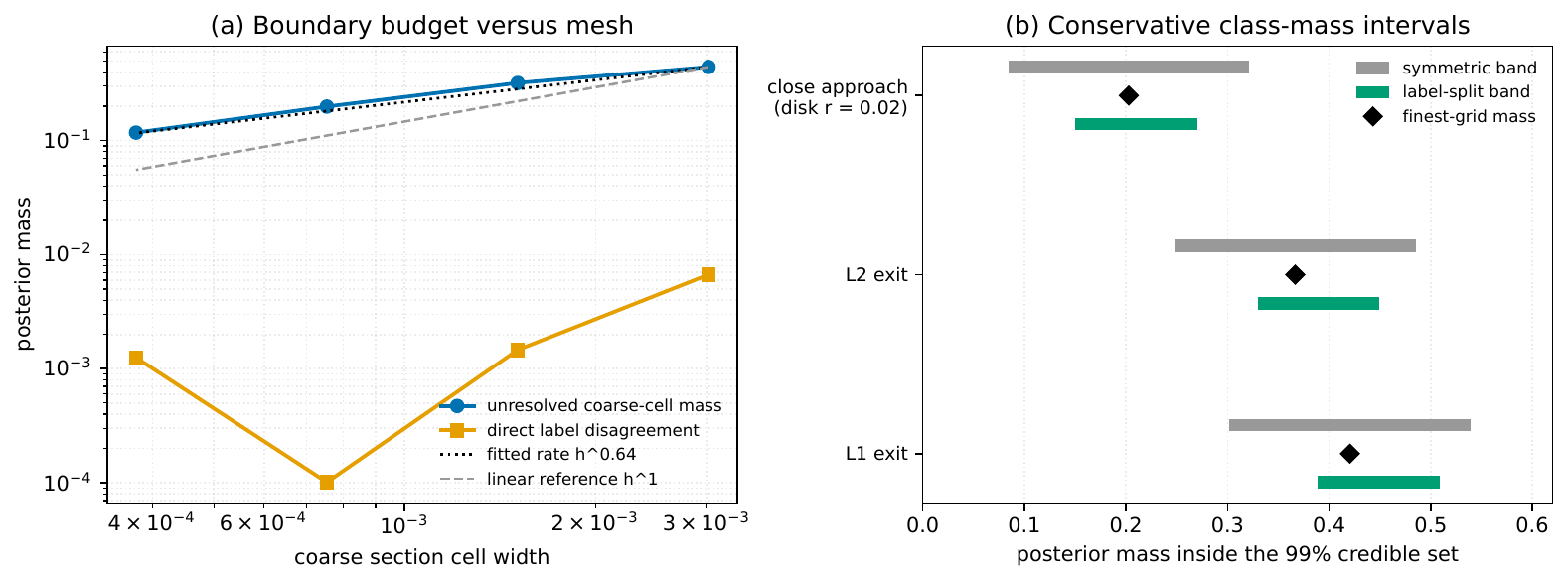}
\caption{\footnotesize (a) Unresolved coarse-cell mass and direct label-disagreement mass against the coarse section cell width, with the fitted rate and a linear reference. (b) Finest-grid class masses with the symmetric band of Eq.~\eqref{eq:pcr3bp-mass-interval} and the label-split band of Eq.~\eqref{eq:pcr3bp-split-interval}.}
\label{fig:pcr3bp-scaling}
\end{figure*}

The unresolved mass does decrease, but slowly and sublinearly: the fitted rate
over the ladder is $U_h\sim h^{0.64}$, and the pairwise exponents
$0.47$, $0.69$, $0.76$ are still drifting upward at the finest pair, so the
exponent should be read as an estimate over this range and not as a limit. The
comparison with the linear reference in Fig.~\ref{fig:pcr3bp-scaling}(a) is the
substantive point. For a boundary that is a rectifiable curve one expects the
one-cell band to carry mass of order $h$; a band mass scaling like $h^{p}$ with
$p<1$ is the signature of a boundary whose covering number grows faster,
corresponding to an effective box dimension $d\simeq2-p\simeq1.36$ over the
tested decade. This is consistent with the interlaced first-hit geometry
visible in Fig.~\ref{fig:pcr3bp-transport}(b) and with classical tube dynamics near
the collinear necks \cite{Conley1968,McGehee1969,Koon2000}, and it explains
directly why a fixed uniform mesh cannot deliver certified continuum class
probabilities here: at rate $h^{0.64}$, reducing the budget by one decade
requires refining each coordinate by a factor of about $36$, that is a
$1.3\times10^{3}$-fold increase in the number of section cells and in the
number of propagations. Adaptive refinement of the label boundary, not a finer
uniform grid, is the appropriate next step.

The direct disagreement $D_h$ is smaller by two to three orders of magnitude
but is not monotone: it falls from $6.7\times10^{-3}$ to $1.0\times10^{-4}$ and
then rises to $1.3\times10^{-3}$ on the finest pair. With four points and a
quantity this small relative to $U_h$ we do not fit a rate to it; it is
reported as a measured bound on observed label reversal, and its
non-monotonicity is itself evidence that individual cell labels near the
boundary are not settling under uniform refinement.

\paragraph{Class-mass intervals.}
For the finest pair define the diagnostic budget $B_h=U_h+D_h=0.11875$ and, for
each terminal class $\lambda$, the symmetric band
\begin{equation}
 I_h^\lambda=
 \bigl[\max\{0,\widehat M_h^\lambda-B_h\},
       \min\{0.99,\widehat M_h^\lambda+B_h\}\bigr].
 \label{eq:pcr3bp-mass-interval}
\end{equation}
Equation~\eqref{eq:pcr3bp-mass-interval} is doubly conservative. It ignores that
the unresolved mass is itself labelled on the fine grid, and it ignores the
constraint $\sum_\lambda M^\lambda=M$, so that the three symmetric bands are not
jointly attainable. Splitting $U_h$ by fine-grid label, as in the last column of
Table~\ref{tab:pcr3bp-disagreement}, gives the sharper band
\begin{equation}
 J_h^\lambda=
 \bigl[\widehat M_h^\lambda-U_h^\lambda-D_h,\;
       \widehat M_h^\lambda+(U_h-U_h^\lambda)+D_h\bigr],
 \label{eq:pcr3bp-split-interval}
\end{equation}
which still assigns every unresolved and every reversed cell against the class
in question, but is automatically compatible with the simplex constraint. Its
width is $U_h+2D_h$ for every class, independent of $\lambda$.

Using the $\grid{721}$ masses, Eq.~\eqref{eq:pcr3bp-split-interval} gives
$[0.3887,0.5087]$ for $L_1$, $[0.3297,0.4497]$ for $L_2$ and
$[0.1503,0.2703]$ for close approach to the enlarged disk of radius $0.02$,
each of width $0.1200$. The corresponding symmetric bands of
Eq.~\eqref{eq:pcr3bp-mass-interval} have width $0.2375$, so on the same finest pair the split
construction halves the reported uncertainty at no cost in rigour: the
symmetric band would admit a close-approach mass as small as $0.084$, the split
band only $0.150$.
These remain \emph{diagnostic} intervals: they are neither confidence intervals
nor certified continuum bounds, and they inherit the enlarged-disk convention.
What is now supported quantitatively is that all three labels occur and that
each carries order-one mass with the stated band; what is not supported is a
three-decimal class probability.

All three labels persist over the tested horizons and radii. The horizon is saturated by $T=12$, whereas changing the artificial radius by $\pm10\%$ moves close-approach mass from $0.179$ to $0.228$ and reallocates exit mass accordingly, so that mass must always be quoted with its declared radius.

\subsection{Direct comparison with baseline reports}
\label{sec:baseline-comparison}

Table~\ref{tab:baseline-comparison} records what changes when the same computations are reduced to standard summaries. The comparison is intentionally task-specific: it does not claim that DUG replaces covariance, Monte Carlo or Fisher information, only that those reduced reports omit one of the quantities needed by the stated channel or design task.

\begin{table*}[t]
\centering
\scriptsize
\caption{\footnotesize DUG versus baseline reports on the two benchmarks. Counts are finite-grid quantities under the declared graph and filter.}
\label{tab:baseline-comparison}
\begin{tabularx}{\textwidth}{@{}p{0.16\textwidth}p{0.24\textwidth}p{0.25\textwidth}X@{}}
\toprule
Case & Baseline output & Quantity omitted by that output & DUG output \\
\midrule
Short arc, $\grid{51}$ & Two-bin revolution-label histogram & Spatial fragmentation and its mesh dependence & Eight essential bars at $\grid{51}$, all carrying one of the same two revolution labels; refinement gives exactly two essential bars from $\grid{75}$ onward \\
Short arc, follow-up design & Dominant local Fisher eigenvalue; optimum $12$ h & Finite label-conditioned output laws & $I(\Lambda;Z_\tau)$ is maximized at $8.75$ h; the two criteria have Spearman rank correlation $-0.008$ over 53 epochs \\
PCR3BP, posterior moments or ellipse & One connected local uncertainty summary & First-hit destination labels & Connected credible sets intersect $L_1$, $L_2$ and enlarged-disk close-approach classes \\
PCR3BP, raw event Monte Carlo & Three first-hit frequencies at one mass and mesh & Connectivity across mass, the $h^{0.64}$ boundary rate and unresolved-mass accounting & Labelled mass filtration plus $U_h$, $D_h$ and the split intervals of Eq.~\eqref{eq:pcr3bp-split-interval} \\
\bottomrule
\end{tabularx}
\end{table*}

\section{Discussion and conclusions}
\label{sec:discussion}

Theorem~\ref{thm:persistence-stability} is the mathematical centre of the paper: it is the one statement that controls a filtration whose index is itself a functional of the estimated measure. The continuous tilt example applies all of its hypotheses directly and shows that the concentration modulus, not total variation, dominates the conservative bound; Corollary~\ref{cor:atomic-mass-rank} makes the short-arc quantity $b_h$ a rigorous comparison of two weighted-grid filtrations, and Section~\ref{sec:theorem-scope} states exactly where that control stops.

\subsection{Scope and limitations}
\label{sec:scope}

The scope is a mass-indexed reporting framework with one continuum stability theorem, its refinement and atomic corollaries, and two reproducible finite-resolution demonstrations. The short-arc topological claims are statements about the declared finite graph filtration; all 100 tested noise realizations give two components under eight-neighbour adjacency, but four-neighbour fragmentation and the smaller class mass remain sensitive, and the three reference-measure conventions agree here, not in general. The two-dimensional parameterization is computational: a full orbit-determination pipeline would require the same measure and refinement audits on the corrected manifold, observation biases and higher-dimensional nuisance parameters.

In the PCR3BP, $0.117$ posterior mass still lies in coarse-unresolved cells at the finest tested pair and the unresolved mass decreases only like $h^{0.64}$, so Eq.~\eqref{eq:pcr3bp-split-interval} supports finite-grid label multiplicity with order-one weights, not continuum class probabilities; the fitted exponent itself rests on four points and is still drifting. The close-approach result is conditional on an enlarged disk, and ``no terminal event within $T$'' is only a finite-horizon label. The planar calculation says nothing about the spatial problem, which would need a four-dimensional energy-section representation, posterior-informed coordinates and an independent event-boundary audit.

\subsection{Conclusions}
\label{sec:conclusion}

DUG supplies a common enclosed-mass domain for the persistence of posterior credible sets and transport-conditioned subsets, together with a declared future experiment. Its analytical result is the mass-rank interleaving bound with its refinement and atomic corollaries; the coordinate, isotopy and transport statements are supporting checks.

The reproducible calculations support three finite-resolution statements. First, the short-arc transport-conditioned module has exactly two essential bars and no nontrivial finite bar from $\grid{75}$ onward, so ``two return classes'' is a statement about an unfiltered persistence module rather than about a post-processed count, and it survives 100 noise realizations under eight-neighbour adjacency, with the smaller class mass varying substantially and four-neighbour adjacency remaining fragile. Second, for a degraded follow-up sensor the label-information and local-Fisher criteria have essentially zero rank correlation over 53 epochs, and the Fisher optimum falls exactly where the label information vanishes. Third, in the PCR3BP the three first-hit labels persist under grid, step, horizon and radius tests, while the unresolved boundary mass falls only like $h^{0.64}$, so uniform refinement cannot deliver certified continuum class probabilities and the split intervals of Eq.~\eqref{eq:pcr3bp-split-interval} are what the data support. Persistence of credible sets, future-law information and first-hit organization answer different questions and should be coupled only through a declared scientific task.

\section*{Declaration of competing interest}
The author declares no competing financial or personal interests that could have influenced the work reported in this paper.

\section*{Funding}
This research received no specific grant from any funding agency in the public, commercial, or not-for-profit sectors.

\section*{Data availability}
The examples are synthetic and use no external observational data. Supplementary Software S1, supplied with this submission, is the complete reproducibility archive: it contains the manuscript source, figure-generation code, exact dependency versions and CSV outputs for every numerical table and figure, including the continuous mass-rank and nested-grid PCR3BP diagnostics. Short-arc noise seeds are declared in the scripts and tables; PCR3BP calculations are deterministic. The archive contains a \texttt{README} listing every script, the table or figure it produces and the exact dependency versions, together with a single \texttt{run\_all.sh} that regenerates every CSV and figure in dependency order; a \texttt{--quick} flag on the expensive scripts gives a fast installation check. The submission-system record is the review-stage identifier for S1; the unchanged archive will receive a public archival DOI upon acceptance.

\section*{Declaration of generative AI and AI-assisted technologies in the writing process}
During the preparation of this work the author used OpenAI ChatGPT in order to assist with language editing and mathematical exposition. After using this tool/service, the author reviewed and edited the content as needed and takes full responsibility for the content of the publication.

\appendix

\section{Supporting coordinate, transport and local-limit results}
\label{app:supporting-results}

This appendix records the structural checks used by the operational interface; they are standard consequences of pushforward covariance, the chain rule and Laplace asymptotics. Let $\psi:\X\to\X'$ be a diffeomorphism and set $\mu'=\psi_\#\mu$, $\nu'=\psi_\#\nu$. The Radon--Nikodym chain rule gives
\begin{equation}
 \frac{\dd\mu'}{\dd\nu'}(x')=
 \frac{\dd\mu}{\dd\nu}(\psi^{-1}x')
 \quad \nu'\text{-a.e.}
 \label{eq:appendix-density-covariance}
\end{equation}
Consequently posterior ranks are unchanged, $\rho'(\psi x)=\rho(x)$, and $C^{\prime m}=\psi(C^m)$. The induced maps on homology commute with the mass-inclusion maps, so the credible and label-conditioned persistence modules are isomorphic when labels and reference measures are transported with the state.

If $P_x$ is differentiable in quadratic mean and the state coordinate changes by $x'=\psi(x)$, score directions transform by $u=D\psi_x^{-1}u'$. Hence the Fisher matrix obeys
\begin{equation}
 G'(x')=D\psi_x^{-\mathsf T}G(x)D\psi_x^{-1},
 \label{eq:appendix-fisher-covariance}
\end{equation}
the coordinate law of a covariant two-tensor. An invertible measurable change of output coordinates leaves both the Fisher tensor and mutual information unchanged; a non-invertible reduction can only decrease the latter by data processing.

For a deterministic diffeomorphic flow, material transport is exact: if $\mu_s=(\Phi_{t,s})_\#\mu_t$ and $\nu_s=(\Phi_{t,s})_\#\nu_t$, then $C_s^m=\Phi_{t,s}(C_t^m)$. Homology is therefore unchanged by forecast alone. A Bayesian assimilation multiplies the forecast density by a likelihood and renormalizes it, so its credible filtration must instead be recomputed.

Finally, suppose on a compact admissible set
\begin{equation}
 r_\varepsilon(x)=Z_\varepsilon^{-1}a(x)
 \exp[-V(x)/\varepsilon],
 \label{eq:appendix-laplace-posterior}
\end{equation}
where $a$ is positive and $C^2$, $V$ is $C^3$ with a unique global minimum $x_\star$, and $H=\nabla^2V(x_\star)\succ0$. For every fixed regular mass $m\in(0,1)$, Laplace's method gives
\begin{align}
 \varepsilon^{-1/2}(C_\varepsilon^m-x_\star)
 &\longrightarrow
 \{z:z^{\mathsf T}Hz\le\chi^2_{n,m}\},
 \label{eq:appendix-gaussian-set}\\
 \operatorname{Cov}_{\mu_\varepsilon}(x)
 &=\varepsilon H^{-1}+o(\varepsilon).
 \label{eq:appendix-gaussian-covariance}
\end{align}
The first convergence is local Hausdorff convergence after rescaling. This recovers covariance ellipsoids as a local fixed-mass limit, not as a determination of the global filtration or its transport labels. Proof details are given in Appendices~\ref{app:gaussian} and \ref{app:metric-coordinate}.

\section{Task-factorization witnesses}
\label{app:task-witnesses}

For Proposition~\ref{prop:mi-minimality}, take four states with positive weights, binary labels and binary outputs. The joint law is
\begin{equation}
 p_{\lambda z}=\sum_{i:\lambda(i)=\lambda}w_iP_i(z),
 \label{eq:appendix-mi-factorization}
\end{equation}
so weights, labels and conditional laws suffice. Each is essential. With deterministic $Z=\Lambda$, varying the total label weight changes the binary entropy. With balanced fixed weights and labels, state-independent Bernoulli$(1/2)$ laws give zero mutual information whereas $P_i=\delta_{\lambda(i)}$ gives one bit. With uniform weights and conditional laws $(\delta_0,\delta_0,\delta_1,\delta_1)$, labels $(0,0,1,1)$ give one bit while crossed labels $(0,1,0,1)$ give zero. The first two future-law families can be locally constant on four separated neighborhoods, so both have zero local Fisher tensor while their finite label information differs. This proves Corollary~\ref{cor:fisher-not-sufficient}.

For Proposition~\ref{thm:skeleton-obstruction}, let the common credible set be $D_2=\{(x,y):x^2+y^2\le4\}$ and take $T=1$. The setting of Section~\ref{sec:setting} requires genuine two-parameter flows satisfying the cocycle property, not merely a pair of terminal diffeomorphisms, so the witnesses are given as vector fields. On $\R^2$ define the smooth, complete, nonautonomous fields
\begin{equation}
 W^A(s,x,y)=(1,0),
 \,\,
 W^B(s,x,y)=\bigl(1,\;3\left[1-(x-s)^2\right]\bigr)
 \label{eq:appendix-shear-fields}
\end{equation}
Along any solution of $W^B$ starting from $(x,y)$ at time $s_0$ one has $x(s)=x+(s-s_0)$, so the second component of $W^B$ evaluates to the constant $3[1-(x-s_0)^2]$ and the flows integrate in closed form:
\[
 \Phi^A_{s_0,s}(x,y)=(x+(s-s_0),\,y),
 \]
 \begin{equation}
 \Phi^B_{s_0,s}(x,y)=\bigl(x+(s-s_0),\;y+3\left[1-(x-s_0)^2\right](s-s_0)\bigr).
 \label{eq:appendix-shear-flows}
\end{equation}
Both families are $C^\infty$ in all arguments, are diffeomorphisms of $\R^2$ for every pair $(s_0,s)$, and satisfy $\Phi_{s_1,s_2}\circ\Phi_{s_0,s_1}=\Phi_{s_0,s_2}$, as required by Eq.~\eqref{eq:flow}: for $W^B$ the increment accumulated on $[s_0,s_1]$ and on $[s_1,s_2]$ carries the same constant $3[1-(x-s_0)^2]$, because the field is designed to be evaluated along the moving point. Their time-one maps from $s_0=0$ are
\begin{equation}
 \Phi_1^A(x,y)=(x+1,y),\qquad
 \Phi_1^B(x,y)=(x+1,y+3(1-x^2)).
 \label{eq:appendix-shear-maps}
\end{equation}
Both systems report only $Z=\operatorname{pr}_x\Phi_1(x,y)+\xi=x+1+\xi$ with the same Gaussian noise. For the common target line $B=\{y=0\}$, $D_2\cap(\Phi_1^A)^{-1}(B)$ is one segment. In system B the preimage is $y=-3(1-x^2)$ and membership in $D_2$ requires
\begin{equation}
 9u^2-17u+5\le0,\qquad u=x^2.
 \label{eq:appendix-shear-roots}
\end{equation}
Its two positive roots split the admissible $x$ values into one negative and one positive interval, hence two disjoint graph arcs. The reported posterior and future observable agree, but channel multiplicity does not; the missing target label is therefore task-essential.

\section{Full proof of the mass-rank stability theorem}
\label{app:mass-rank-proof}

Fix $x\in X$ and write $A_x=\{y:\widehat f(y)\le\widehat f(x)\}$. The uniform index bound gives
\begin{equation}
 \{f\le f(x)-2\varepsilon_f\}
 \subseteq A_x
 \subseteq
 \{f\le f(x)+2\varepsilon_f\}.
 \label{eq:rank-proof-inclusions}
\end{equation}
Indeed, on the left $\widehat f(y)\le f(y)+\varepsilon_f\le f(x)-\varepsilon_f\le\widehat f(x)$; the right implication follows by reversing these inequalities.

For the upper comparison, total variation and Eq.~\eqref{eq:rank-proof-inclusions} give
\begin{align}
 \widehat\rho_{\widehat f}(x)
 &=\widehat\mu(A_x)\\
 &\le \mu(A_x)+\varepsilon_\mu\\
 &\le \mu\{f\le f(x)+2\varepsilon_f\}+\varepsilon_\mu\\
 &\le \rho_f(x)+\omega_f(\varepsilon_f)+\varepsilon_\mu.
 \label{eq:rank-proof-upper}
\end{align}
The added index interval has length $2\varepsilon_f$ and lies in the window centered at $f(x)+\varepsilon_f$. Similarly,
\begin{align}
 \widehat\rho_{\widehat f}(x)
 &\ge \mu(A_x)-\varepsilon_\mu\\
 &\ge \mu\{f\le f(x)-2\varepsilon_f\}-\varepsilon_\mu\\
 &\ge \rho_f(x)-\omega_f(\varepsilon_f)-\varepsilon_\mu,
 \label{eq:rank-proof-lower}
\end{align}
where the removed interval lies in the window centered at $f(x)-\varepsilon_f$. Thus
\begin{equation}
 |\rho_f(x)-\widehat\rho_{\widehat f}(x)|
 \le\varepsilon_\mu+\omega_f(\varepsilon_f).
 \label{eq:rank-proof-first-modulus}
\end{equation}
Interchanging $(f,\mu)$ and $(\widehat f,\widehat\mu)$ repeats both one-sided comparisons and independently gives
\begin{equation}
 |\rho_f(x)-\widehat\rho_{\widehat f}(x)|
 \le\varepsilon_\mu+\omega_{\widehat f}(\varepsilon_f).
 \label{eq:rank-proof-second-modulus}
\end{equation}
The total-variation distance and the uniform index norm are symmetric. Therefore either complete bound is valid, which justifies taking their minimum. Supremizing over $x$ proves Eq.~\eqref{eq:mass-rank-bound}. The argument yields $\omega(\varepsilon_f)$, improving the more conservative $\omega(2\varepsilon_f)$ obtained by centering a radius-$2\varepsilon_f$ window at $f(x)$.

If $x\in C_f^{m-b}$, then $\widehat\rho_{\widehat f}(x)\le\rho_f(x)+b\le m$, so $C_f^{m-b}\subseteq C_{\widehat f}^m$. The other inclusions in Eq.~\eqref{eq:mass-rank-interleaving} follow by exchanging the functions and shifting $m$. The extensions by $\varnothing$ and $X$ make these statements valid when $m\pm b$ leaves $[0,1]$, with no endpoint exception.

Applying $H_k(-;\F)$ produces natural transformations between the functors $\R\to\mathrm{Vect}_{\F}$ shifted by $b$. Because the set maps are literal inclusions, the two composites are the $2b$ structure maps; hence the modules are $b$-interleaved. For q-tame modules, algebraic stability gives $d_B\le d_I\le b$ \cite{Chazal2016}. The isometry theorem is not required. Plateaus cause no problem for the rank filtrations; they affect only their identification with exact-mass credible sets.

\section{Forecast, update and the regular-value criterion}
\label{app:isotopy}

Forecast and update play fundamentally different geometric roles. A deterministic forecast is
\begin{equation}
 \bar\mu_{t+\Delta}=(\Phi_{t,t+\Delta})_{\#}\mu_t,
 \label{eq:forecast}
\end{equation}
whereas assimilation of a new observation \(y\) with likelihood \(L_y\) gives
\begin{equation}
 \dd\mu_{t+\Delta}(x)
 =\frac{L_y(x)\,\dd\bar\mu_{t+\Delta}(x)}
 {\int_{\X}L_y(z)\,\dd\bar\mu_{t+\Delta}(z)}.
 \label{eq:bayes-update}
\end{equation}
The measure is updated first. The credible filtration, metric, persistence module and skeleton are then transported where justified and otherwise recomputed from the updated model. There is no meaningful operation that simply ``pushes forward the entire tuple'' through an observational update.

To identify when topology can change, let \(\A\) be a compact smooth manifold with $C^2$ boundary and consider a one-parameter family of reference-relative posterior densities \(r_\theta\), with corresponding regular mass threshold \(c_\theta(m)\). The following is a supporting continuation result in the spirit of standard isotopy and isotopy-extension lemmas for regular level families, including their boundary and stratified variants \cite{Hirsch1976,GoreskyMacPherson1988}. Its role here is to record the condition for a \emph{moving enclosed-mass threshold}; no independent novelty is claimed for the underlying isotopy argument.

\begin{theorem}[Regular-value criterion for topological stability]
\label{thm:isotopy}
Fix \(m\in(0,1)\). Suppose \((\theta,x)\mapsto r_\theta(x)\) and \(\theta\mapsto c_\theta(m)\) are \(C^2\), and for every \(\theta\in[0,1]\):
\begin{enumerate}[label=(\roman*)]
\item \(\nabla_x r_\theta(x)\neq0\) whenever \(r_\theta(x)=c_\theta(m)\) in the interior of \(\A\);
\item the level set intersects \(\partial\A\) transversally.
\end{enumerate}
Then there is a boundary-preserving ambient isotopy $\Psi_\theta:\A\to\A$ with $\Psi_0=\Id$ and
\[
 \Psi_\theta(C_0^m)=C_\theta^m,
 \qquad C_\theta^m=\{x\in\A:r_\theta(x)\ge c_\theta(m)\}.
\]
Thus the sets and their level boundaries are mutually ambiently isotopic; their homology groups and Betti numbers are constant in \(\theta\).
\end{theorem}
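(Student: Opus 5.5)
The plan is to reduce the moving-threshold problem to a fixed-level problem and then integrate a time-dependent vector field. Set $F(\theta,x)=r_\theta(x)-c_\theta(m)$, which is $C^2$ on $[0,1]\times\A$. Then $C_\theta^m=\{F(\theta,\cdot)\ge0\}$ and its level boundary is $\Sigma_\theta=\{F(\theta,\cdot)=0\}$. Hypothesis (i) says $0$ is a regular value of $F(\theta,\cdot)$ in the interior. Hypothesis (ii) says $\Sigma_\theta$ meets $\partial\A$ transversally, so $0$ is also a regular value of $F(\theta,\cdot)|_{\partial\A}$. By compactness of $[0,1]\times\A$ and continuity of $\nabla_xF$, both nondegeneracy conditions hold uniformly on a neighbourhood $N$ of $\bigcup_\theta\{\theta\}\times\Sigma_\theta$, with a positive lower bound on $|\nabla_xF|$ and on $|\nabla_x(F|_{\partial\A})|$ there.

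The key step is to build a $C^1$ time-dependent vector field $V_\theta$ on $\A$ with three properties:
\[
 \partial_\theta F+\langle\nabla_xF,V_\theta\rangle=0 \text{ on } N,\qquad V_\theta \text{ tangent to } \partial\A \text{ along } \partial\A,\qquad V_\theta=0 \text{ outside } N.
\]
Away from $\partial\A$ this is the standard choice $V=-\partial_\theta F\,\nabla_xF/|\nabla_xF|^2$. Near $\partial\A$ I would use the boundary-restricted gradient $\nabla_x(F|_{\partial\A})$, which is nonzero on $\Sigma_\theta\cap\partial\A$ by (ii). This gives a tangent field solving the same equation on $\partial\A$. I would then extend it inward in a collar chart $\partial\A\times[0,\delta)$, keeping the equation satisfied: add a correction along the interior gradient, normalised so that it vanishes on the boundary. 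The local pieces are then glued by a partition of unity subordinate to {interior, collar}. Because the defining equation is affine in $V$, convex combinations of local solutions remain solutions. Finally I multiply by a cutoff equal to $1$ near $\bigcup_\theta\Sigma_\theta$ and supported in $N$.

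The flow $\Psi_\theta$ of $V_\theta$ exists on all of $[0,1]$ and maps $\A$ to itself. This follows because $\A$ is compact and $V$ is tangent to $\partial\A$, and the flow preserves $\partial\A$. Along trajectories starting on $\Sigma_0$ we have $\frac{d}{d\theta}F(\theta,\Psi_\theta x)=0$, so $\Psi_\theta(\Sigma_0)=\Sigma_\theta$. Since $\Psi_\theta$ is a homeomorphism carrying the level set onto the level set, it maps each complementary region to a complementary region. By continuity in $\theta$ and the fact that $\Psi_0=\Id$, the region $\{F\ge0\}$ goes to $\{F\ge0\}$, so $\Psi_\theta(C_0^m)=C_\theta^m$. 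Homotopy invariance of singular homology then gives constancy of $H_k(C_\theta^m;\F)$ and of the Betti numbers.

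I expect the boundary construction to be the main obstacle. The field must simultaneously solve the transport equation and be tangent to $\partial\A$ at corner points of $\Sigma_\theta\cap\partial\A$. Transversality is exactly what makes the tangential solution exist, but keeping the extension $C^1$, so that the flow is unique and stays in $\A$, requires care with the collar. If the explicit collar correction proves awkward, I would instead cite the relative isotopy-extension theorem for manifolds with boundary \cite{Hirsch1976} applied to the family $\Sigma_\theta$ of neat submanifolds. The $C^2$ regularity of $r$ and $c$ ensures these submanifolds form a smooth isotopy.
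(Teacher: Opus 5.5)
Your proposal is correct and follows essentially the same route as the paper. Both arguments use $F=r_\theta-c_\theta(m)$, the interior field $-\partial_\theta F\,\nabla_xF/\|\nabla_xF\|^2$ with its tangential-gradient counterpart on $\partial\A$, partition-of-unity gluing justified by the transport identity being affine in $V_\theta$, a cut-off near the moving zero level, and a no-crossing argument for the sign regions. Your inward collar correction along $\nabla_xF$, proportional to a transport residual that vanishes on $\partial\A$, simply makes explicit the step the paper states as ``extend these local interior and boundary solutions to a collar.''
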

\begin{proof}

The threshold regularity need not be assumed independently in common cases. By the coarea formula, if the selected level is regular and its level surface has finite nonzero weighted area, the derivative of $G_\theta(c)=\mu_\theta\{r_\theta\ge c\}$ with respect to $c$ is finite and nonzero. Joint $C^2$ regularity and the implicit-function theorem then give the required differentiability of $c_\theta(m)$ locally; the theorem assumes a global branch only to avoid repeated continuation notation.

\begin{remark}[Role of the global hypotheses]
Compactness and $C^2$ boundary regularity are sufficient conditions for the global ambient-isotopy proof, not necessary conditions for every local continuation result. Properness on a neighborhood of the moving level set can replace global compactness, and stratified variants are possible under additional hypotheses. The nonsmooth, cut-defined admissible regions used in orbit determination do not satisfy this theorem automatically: it explains which critical or tangency mechanisms to test, but it does not certify their grid intersections.
\end{remark}

Set \(F(\theta,x)=r_\theta(x)-c_\theta(m)\). In the interior of $\A$, \(\nabla_xF\neq0\) on \(F^{-1}(0)\), and the local choice
\begin{equation}
 V_\theta(x)=-\frac{\partial_\theta F(\theta,x)}{\|\nabla_xF(\theta,x)\|^2}\nabla_xF(\theta,x).
 \label{eq:isotopy-vector}
\end{equation}
satisfies
\begin{equation}
 \partial_\theta F+\nabla_xF\cdot V_\theta=0.
 \label{eq:isotopy-transport}
\end{equation}
At $F^{-1}(0)\cap\partial\A$, transversality is equivalent to the nonvanishing of the tangential gradient $\nabla_{\partial\A}F$. Hence the boundary field
\[
 V_\theta^{\partial}(x)=
 -\frac{\partial_\theta F(\theta,x)}
 {\|\nabla_{\partial\A}F(\theta,x)\|^2}
 \nabla_{\partial\A}F(\theta,x)
\]
is tangent to $\partial\A$ and satisfies Eq.~\eqref{eq:isotopy-transport}. Extend these local interior and boundary solutions to a collar of the level set. Because Eq.~\eqref{eq:isotopy-transport} is affine in $V_\theta$, a partition-of-unity combination of local solutions still satisfies it. Multiplying by a cut-off that equals one near $F^{-1}(0)$ extends the field to all of $\A$ without changing the transport identity on the level set and makes the flow the identity away from a tubular neighborhood. The $C^2$ regularity assumptions make the resulting field $C^1$; compactness gives its parameter-dependent flow for all $\theta\in[0,1]$. This flow preserves $\partial\A$ and transports the zero level. A point cannot change from the superlevel side to the sublevel side without crossing that transported boundary, while the cut-off fixes the remote sign regions. Hence the ambient flow carries $C_0^m$ to $C_\theta^m$. Critical values or boundary tangencies are precisely the points where this construction can fail, consistent with Morse-theoretic intuition \cite{Milnor1963}.
\end{proof}

\section{Proof details for the local Gaussian limit}
\label{app:gaussian}

Use normal coordinates centered at \(x_\star\) and set \(x=x_\star+\sqrt\varepsilon z\). Uniformly on bounded \(z\)-sets,
\begin{align}
 V(x)&=V(x_\star)+\frac{\varepsilon}{2}z^{\T}Hz
 +O(\varepsilon^{3/2}\|z\|^3),\\
 a(x)&=a(x_\star)+O(\sqrt\varepsilon\|z\|).
\end{align}
After normalization, the rescaled density converges locally uniformly to the density of \(\mathcal N(0,H^{-1})\). For every chart neighborhood $U$ of $x_\star$, compactness and uniqueness of the global minimum give $\inf_{\A\setminus U}V>V(x_\star)$. The resulting positive gap makes the posterior mass outside $U$ exponentially small. Inside $U$, positive definiteness of $H$ gives a quadratic lower bound, so the rescaled tails and second moments are uniformly integrable. At a regular Gaussian density quantile, local uniform convergence and the tail bound imply convergence of the threshold and of the superlevel-set boundaries. The Gaussian superlevel set carrying mass \(m\) is \(\{z:z^{\T}Hz\le\chi^2_{n,m}\}\), proving \eqref{eq:appendix-gaussian-set}; uniform integrability gives \eqref{eq:appendix-gaussian-covariance}.

\section{Coordinate form of the Gaussian metric}
\label{app:metric-coordinate}

Let \(J_{t,s}(x)=D\Phi_{t,s}(x)\) and \(H_s(x_s)=Dh_s(x_s)\). Equation~\eqref{eq:gramian} becomes
\begin{equation}
 g_{t,T,x}=\int_t^{t+T}
 J_{t,s}(x)^{\T}H_s(\Phi_{t,s}(x))^{\T}R_s^{-1}
 H_s(\Phi_{t,s}(x))J_{t,s}(x)\,\dd s.
 \label{eq:expanded-metric}
\end{equation}
Under \(x'=\psi(x)\), the matrix representative satisfies
\begin{equation}
 G'(x')=D\psi_x^{-\T}G(x)D\psi_x^{-1},
 \label{eq:matrix-transform}
\end{equation}
which is the covariant transformation law of a \((0,2)\)-tensor.

\section{The equal-moment mixture witness}
\label{app:covariance-witness}

Let $\mu_1=\mathcal N(0,I_n)$. Every density superlevel set of $\mu_1$ is a ball, so $\beta_0(C_1^m)=1$ at every regular mass. For $0<a<1$, let
\[
 \mu_2=\tfrac12\mathcal N(ae_1,\Sigma_a)
       +\tfrac12\mathcal N(-ae_1,\Sigma_a),
 \]
 \begin{equation}
 \Sigma_a=\operatorname{diag}(1-a^2,1,\ldots,1)
 \label{eq:equal-cov-mixture}
\end{equation}
The symmetry gives zero mean and
\(
 \operatorname{Cov}(\mu_2)=\Sigma_a+a^2e_1e_1^{\T}=I_n,
\)
so $\mu_1$ and $\mu_2$ have identical first two moments. Both densities are smooth and strictly positive on all of $\R^n$; the disconnected credible region is therefore not caused by disconnected support. For the one-dimensional factor along $e_1$, stationary points solve
\(
 x=a\tanh\!\left(ax/(1-a^2)\right).
\)
Indeed, up to a constant this factor is
\[
 p_a(x)=
 \exp\!\left[-\frac{x^2+a^2}{2(1-a^2)}\right]
 \cosh\!\left(\frac{ax}{1-a^2}\right)
 \]
 \[
 (\log p_a)''(0)=\frac{2a^2-1}{(1-a^2)^2}.
\]
Thus the classical equal-weight normal-mixture criterion $a^2>1/2$, equivalently $a>2^{-1/2}\simeq0.70710678$ \cite{RobertsonFryer1969,Behboodian1970}, makes the origin a strict local minimum; there is exactly one positive stationary solution and its negative, both nondegenerate maxima. For $n>1$ the density is the product $p_a(x_1)\phi(z)$. Radial contraction in $z$ deformation-retracts every superlevel set onto $\{p_a\ge c/\phi(0)\}\times\{0\}$, so its zeroth Betti number is exactly the one-dimensional count. Hence any regular threshold strictly between the density at the origin and the modal density has two superlevel components. Choosing $m$ as the enclosed probability gives $\beta_0(C_2^m)=2$, while $\beta_0(C_1^m)=1$.

\section{Complexity and high-dimensional representations}
\label{app:complexity}

Let $N$ be the number of weighted posterior samples, $n$ the state dimension, $q$ the output dimension, $k$ a sparse-graph degree and $C_{\Phi}$ the cost of one propagation. Sorting by the log-density index costs $O(N\log N)$. For the $H_0$ questions used here, a sparse adjacency graph can be processed incrementally by union--find at near-linear cost in its edges; higher-dimensional Vietoris--Rips complexes are not the default and may grow combinatorially.

Terminal or hitting labels cost $O(NC_{\Phi})$ before boundary refinement. Direct Fisher evaluation requires tangent or score calculations, while adjoints are preferable when $q\ll n$ and matrix-free low-rank methods when the information operator has small effective rank. Coherent-set, committor and invariant-manifold errors are system-specific and must remain separate from posterior approximation error.

High-dimensional reports should use posterior-informed subspaces, sparse graphs and output-adapted metrics rather than reconstructing every ambient credible surface. The required validation remains componentwise: posterior mass and score, graph or mesh scale, future-law approximation, event labels and reduced-output error.

\printcredits

\bibliographystyle{cas-model2-names}
\bibliography{dug_physica_d}

\end{document}